\documentclass[a4paper]{article}

\usepackage{fullpage,url,amsmath,amsfonts,amssymb, tedmath}%authblk}
\newcommand{\tr}{\mbox{tr}}
\newcommand{\cc}{\mathbb{C}}

\newcommand{\cir}{\,\text{circ}}
\newcommand{\z}{\zeta}

\newcommand{\ov}{\overline}
\parskip .07cm
\title{Paraunitary matrices\footnote{MSC 2010 Classification: 15B99, 94A12}}% from idempotent and related systems}
\author{Barry Hurley\footnote{National University
 of Ireland Galway, email: Barryj\_2001@yahoo.co.uk} \, \& Ted
 Hurley\footnote{National Universiy of Ireland Galway, email:
 Ted.Hurley@Nuigalway.ie }}
%\affil[1]{National University of Ireland Galway}
\date{}

% increase matrix size capabilities
\setcounter{MaxMatrixCols}{15}

\begin{document}

\maketitle

\begin{centering} 

%CONFIDENTIAL

\end{centering}

%\input{tanglemore}

%\section{Convpat:}

%\input{sep3}
%\input{sep4}

%\input{sep2}
%\input{newsep}
%\input{tanglemoremore}
%\input{entangle2}
%\input{newpara}
%\input{mimoadd}
%\input{entangle1}
%\input{tangle}
\begin{abstract}

Design methods for paraunitary matrices from 
complete orthogonal sets of idempotents and related matrix structures
 are presented. 
These include techniques for designing  non-separable   
multidimensional paraunitary matrices. % are included.
Properties of  the structures are obtained and proofs given. % and these
 Paraunitary matrices play a central role in signal processing, in
 particular in the areas of filterbanks and wavelets. 
\end{abstract}

\section{Introduction}

 A one-dimensional (1D) paraunitary matrix over $\cc$ is a square
matrix $U(z)$ satisfying 
$U(z){U}^*(z^{-1}) = 1$. Here $^*$ denotes complex conjugate transposed and $1$
denotes the identity matrix of the size of $U(z)$. 
In general a $k$-dimensional (kD) paraunitary
matrix over $\cc$ is a matrix $U(\bf{z})$ where ${\bf{z}}=(z_1,z_2,
\ldots, z_k)$ is 
a vector of (commuting) variables $\{z_1,z_2, \ldots, z_k\}$ such that
$U({\bf{z}}){U}^*({\bf{z}^{-1}}) = 1$ and  
 ${\bf{z}^{-1}} = (z_1^{-1},z_2^{-1}, \ldots, z_k^{-1})$.

Over fields other than $\cc$ a
 paraunitary matrix  is a matrix $U(\bf{z})$ satisfying 
 $U({\bf{z}}){U}({\bf{z}^{-1}})\T = 1$.   
%Thus a $k$-dimensional paraunitary
%matrix is a matrix $U(\bf{z})$ where ${\bf{z}}=(z_1,z_2,
%\ldots, z_k)$ is 
%a vector of (commuting) variables $z_1,z_2, \ldots, z_k$ such that
%$U({\bf{z}}){U}^*({\bf{z}^{-1}}) = 1$ and  
% ${\bf{z}^{-1}} = (z_1^{-1},z_2^{-1}, \ldots, z_k^{-1})$. 

Paraunitary matrices are important in signal
  processing and in particular the concept of a paraunitary matrix
  plays a fundamental role in the research area of multirate 
filterbanks and  wavelets.  
In the polyphase domain, the synthesis matrix of an orthogonal filter
bank is a paraunitary matrix;  see for example \cite{strang}. 

Orthogonal filter banks may also
be used to construct orthonormal wavelet bases \cite{daub1,daub2}; see
also references in \cite{zhou}. Paraunitary matrices over finite fields
have been studied for their own interest and for applications; see
for example \cite{finf}.

Here general methods for constructing and designing such
  matrices from complete orthogonal sets of idempotents together with 
related matrix schemes are presented. This includes methods for
  designing non-separable multidimensional paraunitary matrices. 
Construction methods for complete orthogonal sets of
idempotents are included. Group ring construction methods were the
original motivation and from these more general methods evolved.
A structure called
  the {\em tangle} of matrices is introduced; this may have independent
  interest.
  
 %and it  may have  independent interest. % which are presented. 
 
%A {\em tangled product} of matrices relative to their block forms is
%introduced and this is used to construct multidimensional non-separable
%paraunitary matrices. This tangled product may have independent
%interest. 

In certain cases  specialising the
variables of the paraunitary matrices allows the construction of series of  
regular real or complex Hadamard matrices.
Walsh-Hadamard matrices, used extensively in the communications' areas, 
are examples of such regular Hadamard matrices. Complex Hadamard
matrices arise in the study of operator algebras and
 in the theory of quantum computation.

It is noted  
 that the  renowned building blocks for 1D
paraunitary matrices over $\cc$ due to
Belevitch and Vaidyanathan as described in
\cite{vaid} are constructed from $W=\{F_1,F_2\}$ where
$W$ is a complete orthogonal set of two idempotents in which $F_1$ has
  rank $1$ and $F_2$ has rank $(n-1)$   with  $n$ the size of the
matrices under consideration. See section \ref{tub} below for details on
 this.

Connections between group
rings, matrices and design of codes 
have been established in \cite{hur1}, \cite{hur2} and
\cite{hur3}; these are related but independent. 

 %This additionally shows the power of the
%constructions. 

%Idempotent systems in group rings are included  
%due to their nice structure and properties  
%of the matrices may be deduced from properties of the group rings
% used in their design.

Designing non-separable multidimensional paraunitary matrices 
is deemed difficult as 
there is no multidimensional factorisation theorem corresponding to the
1D factorisation theorem of Belevitch and Vaidyanathan (\cite{vaid}). 
For the finite impulse response (FIR) case there seems to be only a few
examples such as \cite{kova}. See also \cite{zhou} for
background and further discussion. % on this.  
%According to \cite{zhou}, ``Traditional design methods for 1D orthogonal
%filter banks cannot be extended to higher dimensions
%directly due to the lack of an MD factorization theorem... In the finite
%impulse response (FIR) case, there 
%are only a few design examples (for example, \cite{kova} ).''
In \cite{del} a factorization of a subclass of 2D paraunitary matrices
is obtained; these though  involve IIR (infinite impulse response) systems.

In Section \ref{grmat} results are obtained 
on the ranks of the idempotents  and on the determinants of the paraunitary
matrices formed.% so designed. 
% Not all paraunitary matrices may be
 %designed from group ring paraunitary matrices but rather  a design
 %method for producing series of paraunitary matrices from group
 %rings is given. 
% Combining different systems strengthens the
% performance of the paraunitary matrices so designed.  

The concept of a pseudo-paraunitary matrix is introduced in Section
\ref{pseudo} and
construction methods for these are given.  These may also be considered as FIR (finite
impulse response) systems.

\section{Further Notation}

The book \cite{seh} is an excellent reference for background
 material on the algebraic structures used. 

Now $F$ denotes a general field, $R$ denotes a general ring, 
$\cc$  denotes the complex numbers, $\mathbb{R}$ denotes the real
numbers and $\mathbb{Q}$ denotes the rational numbers. Also $\mathbb{F}_q$ 
denotes the finite field of $q$ elements, $R_{n\ti m}$ denotes the
set of $n\ti m$ matrices with coefficients from $R$ and  $R[\bf z]$
denote the polynomial ring with coefficients from $R$ in  commuting
variables ${\bf z}=(z_1,z_2,\ldots, z_s)$. Note that $R[{\bf z}]_{n\ti
m} = R_{n\ti m}[\bf z]$.

Let $R$ be a ring with identity $1_R =1$. (In general $1$ will denote
the identity of the system under consideration.) A {\em complete family of
orthogonal idempotents} is a set $\{e_1, e_2, \ldots, e_k\}$ in $R$
such that \\ (i) $e_i \not = 0$ and $e_i^2 = e_i$, $1\leq i\leq k$;\\ (ii) If
$i\not = j$ then $e_ie_j = 0$; \\ (iii) $1 = e_1+e_2 + \ldots + e_k$. 

The idempotent $e_i$ is said to be {\em primitive} if it cannot be
written as $e_i= e_i^{'}+ e_i^{''}$ where $e_i^{'},e_i^{''}$ are idempotents
such that $e_i^{'}\neq 0,e_i^{''} \neq 0$ and $e_i^{'}e_i^{''}=0$. A 
set of idempotents is said to be {\em primitive} if each
idempotent in the set is primitive.

Various methods for constructing 
 complete orthogonal sets of idempotents  are derived below. 
Such sets always exist  in $FG$, the group ring over a field $F$, when
$char F \not | \, |G|$. See \cite{seh} for properties of group rings and 
related definitions. These idempotent sets are
related to the representation theory of $FG$.

A mapping $^*: R\to R$ in which $r\mapsto r^*, (r\in R)$ 
is said to be an {\em involution} on $R$ if and only if (i) $r^{**} =
r, \, \forall r\in R$, (ii) $(a+b)^* =
a^*+b^*, \, \forall a,b \in R$, and 
(iii) $(ab)^* = b^*a^* , \, \forall a,b \in R$.

We shall be particularly interested in the
case where $^*$ denotes complex conjugate transpose in the case of
matrices over $\cc$ and denotes transpose for matrices over other
fields. Such a mapping $^*$ on group rings is also defined below.

An element $r\in R$ is said to be {\em symmetric} (relative to $^*$) if
$r^* = r$ and a set of elements is said to be symmetric if each element
in the set is symmetric.

$Q\otimes R$ denotes the tensor product of the matrices $Q,R$.

%For fields other than for $\cc$ let 
%$\ov{S}= S$. %In any case all results hold for $\mathbb{C}$.
%Let $A(\bf{z})$ denote a matrix with polynomial
%entries. Define $A({\bf{z}})^*$ to be $A^*({\bf{z}}^{-1})$. When $A$ is
%used for $A(\bf{z})$ write $A^*$ to mean $A(\bf{z})^*$.

As already noted $A^*$ is used to denote the
complex conjugate transpose of a matrix $A$. Suppose $R$ is a
ring with involution $^*$. Then $^*$ may be extended to matrices over
$R$ as follows. Let $M\in R_{n\ti m}$ and define $M*$ to be the matrix
with each entry $u$ of $M$ replaced by $u^*$. Then define
$M^* = {M*}\T$. This matrix $M^*$ has size $m\ti n$. %with coefficients from
%a field  under consideration. 
Let $A(\bf{z})$ be a matrix with polynomial in variables ${\bf z}$ 
over some ring with involution $^*$.
Define $A({\bf{z}})^*$ to be
$A^*({\bf{z}}^{-1})$. When $A$ is 
used for $A(\bf{z})$ write $A^*$ to mean $A(\bf{z})^*$.  (In other words
consider `complex conjugate transposed' of a variable $ z$ to be
$z^{-1}$; this is consistent with group/group ring considerations.)

Let $R$ be a ring with involution $^*$. For $w({\bf z})\in R[\bf
z]$ define $w({\bf z})^* = w^*({\bf
z^{-1}})$.
Say $w({\bf{z}})$ is a {\em paraunitary
element} in $R[\bf{z}]$ (relative to $^*$) 
if and only if $w({\bf{z}})w^*({\bf{z^{-1}}})=w({\bf{z}})w({\bf{z}})^*= 1$.

Suppose $K=(B_1, B_2, \ldots, B_k)$ and $L=(C_1,C_2, \ldots, C_k)$ are
rows of blocks of a matrix  $P$ where each block is of  the same size. 
Then define the {\em block inner product} of $K$ and $L$, written
$K\cdot L$,   to
be $K\cdot L= B_1C_1^*+ B_2C_2^*+ \ldots + B_kC_k^*$. This is to
include the case when $B_i, C_j$ are polynomial matrices and the $C_j^*$ are
 defined as above.

%In Section \ref{grmat} results are obtained 
%on the ranks of the idempotents  and on the determinants of the paraunitary
%matrices.% so designed. 

\section{Paraunitary elements}\label{1D1}

%Suppose $S\subseteq R$. 
%Say $S$ is {\em symmetric} (with respect to $^*$) if and only if
%$s^* = s, \forall s\in S$. 

The building methods using complete sets of orthogonal idempotents for
the 1D paraunitary matrices in this section are generalised later
in section \ref{multi} below and following.
%It is now proposed to build paraunitary matrices from complete sets
%of orthogonal idempotents. The 1D paraunitary case is treated initially 
%and this is generalised to design multidimensional paraunitary
%matrices in  
The next section
\ref{idems} considers methods for designing such complete sets of orthogonal
idempotents.

\begin{proposition}\label{first}
Let $I=\{e_1, e_2, \ldots, e_k\}$ be a complete orthogonal set of
idempotents in a ring $R$.  
 Define $u(z) = \di\sum_{i=1}^k\pm
  e_iz^{t_i}$. Then $u(z)u(z^{-1})=1$.
\end{proposition}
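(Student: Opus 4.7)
The plan is to expand $u(z)u(z^{-1})$ as a double sum and show that almost all terms vanish by the defining properties of a complete orthogonal set of idempotents, leaving a sum that telescopes to $1$ via condition (iii).

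First I would fix notation: write the $\pm$ signs explicitly as $\epsilon_i \in \{+1,-1\}$ for $1 \leq i \leq k$, so that $u(z) = \sum_{i=1}^{k} \epsilon_i e_i z^{t_i}$ and correspondingly $u(z^{-1}) = \sum_{j=1}^{k} \epsilon_j e_j z^{-t_j}$. Then
\[
u(z)u(z^{-1}) = \sum_{i=1}^{k}\sum_{j=1}^{k} \epsilon_i \epsilon_j \, e_i e_j \, z^{t_i - t_j}.
\]
Next I would split this double sum according to whether $i = j$ or $i \neq j$. For the off-diagonal terms $i \neq j$, property (ii) of the complete orthogonal set gives $e_i e_j = 0$, so every such summand is zero regardless of the signs $\epsilon_i \epsilon_j$ and the exponents $t_i - t_j$. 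This is the crucial step, and it is the only reason the powers $z^{t_i - t_j}$ do not cause trouble — they simply never appear.

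For the diagonal terms $i = j$, use property (i), namely $e_i^2 = e_i$, together with $\epsilon_i^2 = 1$ and $z^{t_i - t_i} = z^0 = 1$, to reduce each summand to $e_i$. The double sum therefore collapses to $\sum_{i=1}^{k} e_i$, and property (iii) identifies this with the identity $1$. Combining these three observations gives $u(z)u(z^{-1}) = 1$, as required.

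I do not expect any real obstacle here: the result is essentially a bookkeeping consequence of the three axioms (i)--(iii) for a complete orthogonal set of idempotents, and the only mild subtlety is noting that the signs $\epsilon_i$ drop out because they only ever appear squared (in the surviving diagonal terms), and that the exponents $t_i$ drop out because they only ever appear as differences $t_i - t_i = 0$ in those same terms. The argument works verbatim over any ring $R$ and places no restriction on the integers $t_i$ or on the choice of signs.
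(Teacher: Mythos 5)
Your proof is correct and is essentially the paper's own argument, just written out in full: the paper's one-line proof also reduces $u(z)u(z^{-1})$ to $e_1^2+\cdots+e_k^2 = e_1+\cdots+e_k = 1$, implicitly using orthogonality to kill the cross terms exactly as you do explicitly. No differences of substance.
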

\begin{proof} Since $\{e_1, e_2, \ldots, e_k\}$ is a complete set  
 of orthogonal idempotents, $u(z)u(z^{-1}) = e_1^2 + e_2^2+ \ldots +
 e_k^2 = e_1 + e_2 + \ldots + e_k = 1$.
\end{proof}
\begin{corollary}\label{second} If $I$ is symmetric then
 $u(z){u}^*(z^{-1}) = 1$.
\end{corollary}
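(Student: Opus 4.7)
The plan is to reduce the corollary directly to Proposition~\ref{first} by unpacking the notation for $u^*(z^{-1})$ and then using the symmetry hypothesis.

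First I would write out $u^*(z^{-1})$ explicitly. By the definition of the extension of $^*$ to polynomial expressions given earlier in the paper, applying $^*$ to $u(z) = \sum_{i=1}^k \pm e_i z^{t_i}$ acts on each coefficient via the involution on $R$ and sends $z$ to $z^{-1}$. Hence
\[
u^*(z^{-1}) \;=\; \sum_{i=1}^k \pm\, e_i^{\,*}\, z^{-t_i}.
\]

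Next I would invoke the symmetry assumption: $I$ being symmetric means $e_i^{\,*} = e_i$ for every $i$. Substituting this into the previous expression gives
\[
u^*(z^{-1}) \;=\; \sum_{i=1}^k \pm\, e_i\, z^{-t_i} \;=\; u(z^{-1}).
\]

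Finally, I would apply Proposition~\ref{first} directly: since $\{e_1,\ldots,e_k\}$ is a complete orthogonal set of idempotents, that proposition yields $u(z)u(z^{-1}) = 1$, so
\[
u(z)\,u^*(z^{-1}) \;=\; u(z)\,u(z^{-1}) \;=\; 1,
\]
which is the claim. There is no real obstacle here; the entire content of the corollary is the observation that under the symmetry hypothesis the involution $^*$ acts trivially on the coefficients, so the starred version of Proposition~\ref{first}'s identity follows from the unstarred one with no extra work.
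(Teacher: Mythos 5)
Your proof is correct and is exactly the argument the paper intends: the corollary is stated without proof as an immediate consequence of Proposition~\ref{first}, and the implicit reasoning is precisely your observation that symmetry of $I$ gives $e_i^* = e_i$, hence $u^*(z^{-1}) = u(z^{-1})$, after which the proposition applies. Nothing is missing.
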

%\begin{corollary}\label{deduct2} Let $U(z) = \di\sum_{i=0}^k\pm
% E_iz^{t_i}$. Then 
 % $U(z)U(z^{-1}) =1$. If ${E}^* = E_i$ for each $i$ 
%then $U(z){U}^*(z^{-1}) =  1$. 
%\end{corollary}

Thus $u(z)$ is a paraunitary element 
when $I$ is a symmetric orthogonal complete set of idempotents. 

It is not necessary to use primitive idempotents.
%; see \cite{seh} for %relevant definition. 
Note also that if
$S=\{e_1, \ldots, e_k\}$ is a complete set  of orthogonal idempotents then
 $\{e_i,e_j\},  \, i\neq j,$ may be replaced by $\{e_i+e_j\}$
in $S$ and the result is (still) a complete set of orthogonal
idempotents.  This idea may be used to obtain real paraunitary matrices
from (complex) complete orthogonal sets of idempotents in group rings.%Using primitive idempotents however may be the most productive.

We single out the case $R= F_{n\ti n}$ 
for special mention.
%Complete sets of orthogonal idempotent matrices may also be used.
\begin{proposition} Let $\{I_1, I_2, \ldots, I_k\}$ be a complete
  symmetric set of
 orthogonal idempotents in the ring $F_{n\ti n}$ of $(n\ti n)$ 
matrices over $F$. 
Then $W(z)= \sum_{i=1}^k \pm I_iz^{t_i}$ is a paraunitary 1D $n\ti n$
 matrix over $F$ where the $t_i$ are non-negative integers.
\end{proposition}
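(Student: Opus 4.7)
The statement is essentially the specialisation of Proposition \ref{first} and Corollary \ref{second} to the ring $R = F_{n\times n}$, so the plan is to observe that all the hypotheses needed for those earlier results are present and then invoke them directly.

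First, I would note that $F_{n\times n}$ is a ring with identity (the $n\times n$ identity matrix, which is what is meant by $1$ in this context), that $F_{n\times n}$ carries the involution $^*$ (complex conjugate transpose if $F=\cc$, transpose otherwise) introduced in the Further Notation section, and that $\{I_1,\ldots,I_k\}$ is by hypothesis a complete orthogonal set of idempotents with respect to this ring structure. Thus the hypotheses of Proposition \ref{first} are satisfied with $R = F_{n\times n}$ and $e_i = I_i$.

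Applying Proposition \ref{first} to $u(z) = W(z) = \sum_{i=1}^{k}\pm I_i z^{t_i}$ immediately yields $W(z)W(z^{-1}) = 1$, since the cross terms vanish by orthogonality ($I_iI_j=0$ for $i\neq j$), the sign products $(\pm)(\pm)$ become $+1$ on the diagonal, and $I_i^2 = I_i$ summed over $i$ gives the identity matrix. Next, because the set $\{I_1,\ldots,I_k\}$ is symmetric, each $I_i^* = I_i$, so by Corollary \ref{second} the stronger identity $W(z)W^*(z^{-1}) = 1$ holds. Finally, by the convention established in the Further Notation section, $W(z)^* = W^*(z^{-1})$, so this last identity is exactly the paraunitary condition $W(z)W(z)^* = 1$. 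The non-negativity of the exponents $t_i$ plays no role in the algebra of the verification; it is stated merely to make $W(z)$ a polynomial matrix (an FIR system) rather than a Laurent polynomial matrix.

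There is really no serious obstacle here: the entire proof is a matter of checking that the general ring-theoretic statement of Proposition \ref{first} and Corollary \ref{second} applies verbatim in the matrix ring $F_{n\times n}$, and that the meaning of $^*$ on matrix-valued Laurent polynomials coincides with the meaning used when declaring paraunitarity. The one small point worth explicitly calling out in the write-up is that orthogonality of the $I_i$ as ring elements of $F_{n\times n}$ is the same thing as the matrix products $I_iI_j$ being the zero matrix, so that all cross terms in the product $W(z)W(z^{-1})$ genuinely disappear.
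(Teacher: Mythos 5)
Your proposal is correct and matches the paper's intent exactly: the paper states this proposition without proof precisely because it is the specialisation of Proposition \ref{first} and Corollary \ref{second} to the ring $R=F_{n\ti n}$, which is the route you take. Your explicit verification of the vanishing cross terms, the squaring of the signs, and the role of symmetry in upgrading $W(z)W(z^{-1})=1$ to $W(z)W^*(z^{-1})=1$ is a faithful expansion of the same argument.
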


In the group ring case a paraunitary element in $FG$ with $|G| = n$
gives a paraunitary matrix in $F_{n\ti n}$ via the embedding of $FG$
into $F_{n\ti n}$ as given for example in \cite{hur3}.
 
Suppose $\{I_1,I_2, \ldots , I_k\}$ is an orthogonal symmetric 
complete set of
idempotents in $F_{n\ti n}$ and that $P$ is a unitary matrix. 
Then also $\{P^*I_1P, P^*I_2P, \ldots,
P^*I_kP\}$ is a symmetric complete orthogonal set of idempotents in
$F_{n\ti n}$.% when $P$ is a unitary matrix.

%The building blocks for 1D paraunitary matrices due to  Belevitch and
%Vaidyanathan as described in
%\cite{vaid} are paraunitary matrices formed from $W=\{F_1,F_2\}$ where
%$W$ is a complete orthogonal symmetric set of idempotents; see section
%\ref{tub}  for an explanation of this fact.
 
%The methods will be  extended to design multidimensional
%paraunitary matrices; see section \ref{multi} below and following. 
 For our purposes say a paraunitary matrix $P$ is separable
if it can be written in the form $P=QR$ or $P=Q\otimes R$ where $Q,R$
are paraunitary with $Q\neq 1, R\neq 1$; otherwise say $P$ is
non-separable.

The following standard lemma is included for completeness and is not needed
subsequently; the proof is omitted. %standard and its proof is omitted.
\begin{lemma}\label{trans}
Suppose $A(\bf z)$ is a paraunitary matrix. Then $A^*(\bf z)$ and
 $A\T(\bf z)$ are paraunitary matrices. \end{lemma}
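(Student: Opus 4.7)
The plan is to unfold the paper's convention $A(\mathbf{z})^{*} = A^{*}(\mathbf{z}^{-1})$ so that the hypothesis reads
\[
A(\mathbf{z})\, A^{*}(\mathbf{z}^{-1}) = 1,
\]
and then derive both conclusions by three elementary manipulations: (a) the substitution $\mathbf{z} \mapsto \mathbf{z}^{-1}$ is a ring automorphism of the matrix Laurent-polynomial ring, so it preserves products and the identity; (b) the involution satisfies $(XY)^{*} = Y^{*} X^{*}$ and $X^{**} = X$, together with $(XY)\T = Y\T X\T$; (c) for square matrices over a commutative ring, a one-sided inverse is automatically a two-sided inverse, since $XY = 1$ forces $\det(X)\det(Y) = 1$ and hence both factors are units. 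All matrices in sight are square of the same size, so (c) applies.

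For $A^{*}(\mathbf{z})$, I need to verify $A^{*}(\mathbf{z}) \cdot (A^{*}(\mathbf{z}))^{*} = 1$; unfolding the outer ${}^{*}$ and using $X^{**} = X$ entrywise, this becomes $A^{*}(\mathbf{z})\, A(\mathbf{z}^{-1}) = 1$. Applying (a) to the hypothesis (substitute $\mathbf{z} \mapsto \mathbf{z}^{-1}$ throughout) yields $A(\mathbf{z}^{-1})\, A^{*}(\mathbf{z}) = 1$, and (c) then swaps the factors to give exactly the desired identity.

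For $A\T(\mathbf{z})$, I transpose the hypothesis by (b):
\[
1 = 1\T = \bigl(A(\mathbf{z})\, A^{*}(\mathbf{z}^{-1})\bigr)\T = \bigl(A^{*}(\mathbf{z}^{-1})\bigr)\T\, A\T(\mathbf{z}).
\]
In the complex case $(A^{*})\T = \ov{A}$, while in a general field ${}^{*}$ is transpose and $(A\T)\T = A$; in either case $\bigl(A^{*}(\mathbf{z}^{-1})\bigr)\T$ coincides with $(A\T(\mathbf{z}))^{*}$ after unfolding via (a). A final application of (c) flips the order and produces $A\T(\mathbf{z}) \cdot (A\T(\mathbf{z}))^{*} = 1$, as required.

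The only real difficulty is bookkeeping: the symbol ${}^{*}$ is overloaded (entry-level involution, the composite ``involve-then-transpose'' on matrices, and the extension $A(\mathbf{z})^{*} = A^{*}(\mathbf{z}^{-1})$ on matrix Laurent polynomials), and one must track which meaning is active at each step. Once this is under control, there is essentially no computation to perform beyond the three manipulations above.
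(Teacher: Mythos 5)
Your argument is correct. Note that the paper itself gives no proof here --- it states the lemma as standard ``for completeness'' and explicitly omits the proof --- so there is no in-paper argument to compare against; yours fills the gap cleanly. The one step that genuinely needs justification is the reversal of factors, and you supply the right reason: over the commutative ring $F[{\bf z},{\bf z}^{-1}]$ a one-sided inverse of a square matrix is two-sided (via $\det(X)\det(Y)=1$ and the adjugate), so $A({\bf z}^{-1})A^*({\bf z})=1$ does yield $A^*({\bf z})A({\bf z}^{-1})=1$, and likewise for the transposed identity. The other ingredients --- that ${\bf z}\mapsto{\bf z}^{-1}$ is a ring automorphism of the scalars and hence preserves matrix products, and that $(A^*)\T=\ov{A}=(A\T)^*$ after unfolding the convention $A({\bf z})^*=A^*({\bf z}^{-1})$ --- are checked correctly. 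This is exactly the bookkeeping the overloaded ${}^*$ demands, and you have it under control.
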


\subsection{Modulus 1}

In Proposition \ref{first} the coefficients of the idempotents are 
$\pm 1$ times monomials. This can be extended in $\cc$ to  coefficients with 
modulus $1$ times monomials. In $\mathbb{R}$ and fields of finite characteristic define
$a^*= a$ and then $\pm 1$ are the only elements which satisfy $aa^*=
a^2=1$.

Suppose $\{E_1, E_2. \ldots, E_k\}$ is a complete symmetric 
orthogonal set of idempotents in $F_{n\ti n}$.
%Consider $w(z) = \al_1e_1z^{t_1} + \al_2e_2z^{t_2} + \ldots +
%\al_ke_kz^{t_k}$ with $\al_i\in F$. 
Define  $W(z)=\al_1E_1z^{t_1}+
\al_2E_2z^{t_2}+ \ldots +\al_kE_kz^{t_k}$ and then ${W}^*(z^{-1}) =
{\al_1}^*E_1z^{-t_1} + {\al_2}^*E_2z^{-t_2}+\ldots +
{\al_k}^*E_kz^{-t_k}$. Here if $a\in \cc$, then $a^*= \ov{a}$, 
the complex conjugate of $a$, and for other fields $a^*=a$. Use $|a|^2$
to mean $aa^*$ for any field.

Therefore  $W(z){W}^*(z^{-1})= W(z)W(z)^*= |\al_1|^2E_1+|\al_2|^2E_2+ \ldots +
|\al_k|^2E_k$ \, \, \, (**). 
\begin{proposition}\label{gent} $W(z)$ is a paraunitary matrix if and only
 if $|\al_i|^2=1$ for each $i$.
\end{proposition}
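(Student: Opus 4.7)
The plan is to work directly from the identity (**) that has just been derived in the excerpt, namely
\[
W(z)W^*(z^{-1}) = |\alpha_1|^2 E_1 + |\alpha_2|^2 E_2 + \ldots + |\alpha_k|^2 E_k.
\]
Everything reduces to analysing when the right-hand side equals the identity $1$, using the defining properties of a complete orthogonal set of idempotents.

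For the ``if'' direction, I would simply substitute $|\alpha_i|^2 = 1$ for each $i$ into the right-hand side of (**). Completeness of the set $\{E_1,\dots,E_k\}$ gives $E_1+E_2+\cdots+E_k = 1$, so $W(z)W^*(z^{-1}) = 1$, which is exactly the paraunitary condition.

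For the ``only if'' direction, assume $W(z)W^*(z^{-1}) = 1$. Combining this with (**) and the completeness relation gives
\[
\sum_{i=1}^k |\alpha_i|^2 E_i = \sum_{i=1}^k E_i,
\qquad \text{equivalently} \qquad
\sum_{i=1}^k \bigl(|\alpha_i|^2 - 1\bigr) E_i = 0.
\]
Now I would fix any index $j$ and multiply this relation on the right by $E_j$. The orthogonality conditions $E_i E_j = 0$ for $i \neq j$ and $E_j^2 = E_j$ collapse the sum to $(|\alpha_j|^2 - 1)E_j = 0$. Since $E_j \neq 0$ by the definition of a complete orthogonal set of idempotents, and since the scalar $|\alpha_j|^2 - 1$ lies in the base field, this forces $|\alpha_j|^2 = 1$. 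As $j$ was arbitrary, the conclusion follows.

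There is no real obstacle here: the computational work has already been done in deriving (**), and the only subtle point is the very last step, where one must invoke $E_j \neq 0$ (a condition built into the definition of a complete orthogonal set of idempotents in the excerpt) to cancel the scalar factor. The argument works verbatim over any field, not just $\cc$, provided one interprets $|\alpha|^2$ as $\alpha\alpha^*$ as done in the excerpt.
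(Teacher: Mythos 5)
Your proof is correct and follows essentially the same route as the paper's: both directions reduce to the identity (**), and the converse is obtained by multiplying through on the right by $E_i$ and using orthogonality together with $E_i\neq 0$ to cancel the scalar. Your write-up is just a slightly more explicit version of the paper's argument, making the appeal to $E_j\neq 0$ visible.
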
 
\begin{proof} If each $|\al_i|^2= 1$ then from (**) $W(z){W}^*(z^{-1})
 =1$. If on the other hand $W(z){W}^*(z^{-1})= 1$ then multiplying (**) through
 (on right) by $E_i$ gives $|\al_i|^2E_i= E_i$ from which it follows that
 $|\al_i|^2=1$.
\end{proof} 

Thus Proposition \ref{first} may be generalised as follows:
\begin{proposition}\label{second2} Let $\{E_1, E_2, \ldots, E_k\}$ be a complete
 symmetric  orthogonal set of
idempotents and  $W(z)=\al_1E_1z^{t_1}+
\al_2E_2z^{t_2}+ \ldots +\al_kE_kz^{t_k}$, with $t_j\geq 0$ and $|\al_j|^2
= 1$ for each  $j$. Then $W(z)$
is a paraunitary matrix. 
\end{proposition}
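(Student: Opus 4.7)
The plan is to observe that Proposition \ref{second2} is essentially the forward (``if'') direction of Proposition \ref{gent}, and to make this dependence explicit. First I would form the product $W(z)W^*(z^{-1})$ by expanding each factor and using the hypothesis that $\{E_1,\ldots,E_k\}$ is a symmetric complete orthogonal set of idempotents. Symmetry gives $E_i^* = E_i$, idempotency gives $E_i^2 = E_i$, and orthogonality kills the cross terms $E_iE_j$ for $i \neq j$; the powers of $z$ cancel in the surviving diagonal terms since $z^{t_i}\cdot z^{-t_i}=1$ and $\alpha_i\alpha_i^* = |\alpha_i|^2$. This is exactly the identity (**) displayed just before Proposition \ref{gent}, namely $W(z)W^*(z^{-1}) = \sum_{i=1}^k |\alpha_i|^2 E_i$.

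Next I would insert the hypothesis $|\alpha_i|^2 = 1$ into (**) to obtain $W(z)W^*(z^{-1}) = \sum_{i=1}^k E_i$, and then invoke completeness of the idempotent set, which gives $\sum_{i=1}^k E_i = 1$. Thus $W(z)W^*(z^{-1}) = 1$, which is precisely the defining paraunitary identity.

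There is no real obstacle: the proposition is an immediate consequence of the displayed computation (**) together with the ``if'' direction of Proposition \ref{gent}, which has already been proved. The only minor point worth flagging is that the condition $t_j \geq 0$ plays no role in the algebra (the same argument works for any integer exponents); its inclusion only ensures that $W(z)$ lies in $F[z]_{n\ti n}$ rather than in the Laurent polynomial ring $F[z,z^{-1}]_{n\ti n}$. The whole proof therefore collapses to one line: apply (**) and use $|\alpha_i|^2=1$ together with $\sum E_i = 1$.
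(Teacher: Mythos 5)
Your proposal is correct and follows exactly the route the paper intends: Proposition \ref{second2} is stated there without a separate proof precisely because it is the ``if'' direction of Proposition \ref{gent}, obtained by substituting $|\alpha_i|^2=1$ into the displayed identity (**) and invoking completeness, $\sum_i E_i = 1$. Your remark that $t_j\geq 0$ is irrelevant to the algebra and only keeps $W(z)$ polynomial rather than Laurent is accurate and consistent with the paper's usage.
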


Now in $\cc $, $|\al|^2=1 $ if and only if $\al =
e^{i\theta}$ for real $\theta$ with $i=\sqrt{-1}$ and in $\mathbb{R}$,
$|\al|^2 = 1$ if and only if $\al=\pm 1$. In a field of
characteristic $p$, $|\al|^2 = \al^2= 1$ if and only if  $ \al=1$ or
$\al = -1 = p-1$. %\{1,

As expected unitary matrices are built from complete symmetric
orthogonal sets of matrices as per Proposition \ref{second2}:
\begin{proposition}\label{7}  $U$ is a unitary $n\ti n$ matrix over $\cc$ if and
 only if $U = \al_1 v_1^*v_1 + \al_2 v_2^*v_1 + \ldots + \al_nv_n^*v_n$
 where $\{v_1, v_2, \ldots, v_n\}$ is an orthonormal basis for $\cc_n$
 and $\al_i\in \cc, \,|\al_i|=1, \, \forall i$. Further the $\al_i$ are
 the eigenvalues of $U$.
\end{proposition}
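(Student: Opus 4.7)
The plan is to exhibit the decomposition as an application of Proposition \ref{second2} in the trivial variable case (constant polynomials, so $z^{t_i}=1$), and then to use the spectral theorem for normal operators to produce such a decomposition for any given unitary $U$.

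First I would verify that, for any orthonormal basis $\{v_1,\dots,v_n\}$ of $\cc_n$ written as row vectors, the matrices $E_i:=v_i^* v_i$ form a complete symmetric orthogonal set of idempotents in $\cc_{n\times n}$. Self-adjointness is immediate from $(v_i^* v_i)^* = v_i^* v_i$, and orthogonality and idempotence follow from $v_i v_j^* = \delta_{ij}$ since $(v_i^* v_i)(v_j^* v_j) = v_i^*(v_i v_j^*)v_j = \delta_{ij}\,v_i^* v_j$. Completeness, $\sum_i v_i^*v_i = 1$, is the usual resolution of the identity for an orthonormal basis. With this in hand the backward implication is immediate from Proposition \ref{gent} (equivalently, from direct computation using equation (**) with all $t_i=0$): the sum $U=\sum_i \alpha_i v_i^* v_i$ satisfies $UU^* = \sum_i |\alpha_i|^2 E_i = 1$ precisely when each $|\alpha_i|=1$.

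For the forward direction I would invoke the spectral theorem for normal matrices. Since $UU^*=U^*U=1$, the matrix $U$ is normal, hence there is an orthonormal basis $\{w_1,\dots,w_n\}$ of $\cc_n$ consisting of eigenvectors of $U$ with eigenvalues $\alpha_1,\dots,\alpha_n$. From $\|Uw_i\|=\|w_i\|$ it follows that $|\alpha_i|=1$ for all $i$. Taking $v_i$ to be the row vector whose conjugate transpose is $w_i$ (so $w_i=v_i^*$), the spectral decomposition reads $U = \sum_i \alpha_i v_i^* v_i$, which is the claimed form.

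Finally, to confirm that the coefficients $\alpha_i$ are precisely the eigenvalues of $U$, I would compute $U v_i^*$ directly from the decomposition: using $v_j v_i^* = \delta_{ji}$, one gets $U v_i^* = \sum_j \alpha_j v_j^*(v_j v_i^*) = \alpha_i v_i^*$, so each $v_i^*$ is an eigenvector with eigenvalue $\alpha_i$, and the $\alpha_i$ exhaust the spectrum (counted with multiplicity) because there are $n$ of them and $\{v_i^*\}$ is a basis. No step looks genuinely hard; the only mild care required is to keep the row/column conventions for the $v_i$ consistent with the definition of $^*$ given in the Further Notation section, so that $v_i^* v_i$ really is an $n\times n$ rank-one matrix rather than a scalar.
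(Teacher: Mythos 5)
Your proposal is correct and follows essentially the same route as the paper: the backward direction rests on observing that $\{v_i^*v_i\}$ is a complete symmetric orthogonal set of idempotents and applying Proposition \ref{second2}, and the forward direction invokes the unitary diagonalisation $U=P^*DP$ (your appeal to the spectral theorem for normal matrices is the same fact), with the eigenvalue claim verified by computing $Uv_i^*=\al_iv_i^*$. No substantive difference from the paper's argument.
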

\begin{proof} Suppose $U= \al_1 v_1^*v_1 + \al_2 v_2^*v_1 + \ldots +
 \al_nv_n^*v_n$ with $\{v_1, v_2, \ldots, v_n\}$ an orthonormal
 basis and $|\al_i| =1$. Then $Uv_i^*=\al_iv_i^*$ and so the $\al_i$ are
 the eigenvalues of $U$. It follows from Proposition
 \ref{second2} that $U$ is unitary since $\{v_1^*v_1, v_2^*v_2,
 \ldots, v_n^*v_n\}$ is a complete symmetric orthogonal set of
 idempotents. 

Suppose then $U$ is a unitary
 matrix.  It is known that there exists a unitary matrix $P$ such that 
$U=P^*DP$ where $D$ is diagonal with entries of modulus $1$. Then $P=
\begin{ssmatrix} v_1 \\ v_2 \\ \vdots \\ v_n \end{ssmatrix}$ where $\{v_1,
 v_2, \ldots , v_n\}$ is an orthonormal basis (of row vectors) for $\cc_n$
 and $D=\diag(\al_1,\al_2, \ldots, \al_n)$ with $|\al_i| =1$ and the
 $\al_i$ are the eigenvalues of $U$.
Then \begin{eqnarray*}  U = P^*DP  \\ =& (v_1^*, v_2^*, \ldots,
 v_n^*)\begin{ssmatrix} \al_1 & 0 & \ldots & 0\\ 0& \al_2 & \ldots & 0
\\ \vdots & \vdots & \vdots & \vdots \\ 0 & 0 & \ldots & \al_n \end{ssmatrix} 
 \begin{ssmatrix} v_1 \\ v_2 \\ \vdots \\ v_n \end{ssmatrix} \\ =& 
(\al_1v_1^*, \al_2v_2^*, \ldots, \al_nv_n^*)\begin{ssmatrix} v_1 \\ v_2
					    \\ \vdots \\ v_n
					    \end{ssmatrix}
\\  = &\al_1v_1^*v_1 + \al_2v_2^*v_2 + \ldots + \al_nv_n^*v_n.
\end{eqnarray*}
 
\end{proof}

Thus  unitary matrices are generated by complete symmetric orthogonal
sets of idempotents formed from the diagonalising unitary matrix. Notice
that the $\al_i$ are the eigenvalues of $U$.

For example consider the real orthogonal/unitary matrix
$U=\begin{ssmatrix} \cos \theta & \sin \theta \\ -\sin \theta & \cos
    \theta \end{ssmatrix}$.
This has eigenvalues $e^{i\theta}, e^{-i\theta}$ and
$P=\frac{1}{\sqrt{2}}\begin{ssmatrix} -1 &-i \\ i& 1 \end{ssmatrix}$ is
a diagonalising unitary matrix. Take the rows $v_1=
\frac{1}{\sqrt{2}}(-1,-i), \, v_2=\frac{1}{\sqrt{2}}(i,1)$
of $P$ and consider the complete orthogonal symmetric set of idempotents
$\{P_1 = v_1^*v_1 = \frac{1}{2}\begin{ssmatrix}1 & -i \\ i & 1
\end{ssmatrix}, P_2 = v_2^*v_2 = \frac{1}{2}\begin{ssmatrix}1 & i \\ -i & 1
\end{ssmatrix} \}$.

Then applying Propositon \ref{7} gives  $U=e^{i\theta}P_1+ e^{-i\theta}P_2 =
\frac{1}{2}e^{i\theta}\begin{ssmatrix}1 & -i \\ i & 1 
		      \end{ssmatrix} + \frac{1}{2}e^{-i\theta}
\begin{ssmatrix}1 & i \\ -i & 1
\end{ssmatrix} $, which may be checked independently.

%Proposition \ref{second2} may be applied to any field.

%Hence in the structureet of idempotents to give  
% may be used with a coefficient of modulus 1 and
%further classes of paraunitary matrices. % as above. 
%The examples given below do not use coefficients with modulus 1
%(except $\pm 1$) for clarity
%but it should be noted that such coefficients may be used in general. 
%The result in \thmref{first} can be considered as an equation in $RC_\infty$ where $R=FG$. 

\subsection{Products}
%To construct paraunitary matrices, find
%complete sets of orthogonal idempotents in a group ring and then  convert
%these idempotents into matrices as per an isomorphism between the group
%ring and a ring of matrices as for example explicitly expressed in \cite{hur3}.

A product of paraunitary matrices and the tensor
product of paraunitary matrices are also 
paraunitary matrices. Thus further 
paraunitary matrices may be designed using these products from those
already constructed . % from different systems or schemes. 

\section{Complete orthogonal sets of idempotents}\label{idems}
Paraunitary matrices are designed from complete
symmetric sets of orthogonal idempotents in  section \ref{1D1}
and also in later sections. 
Here we  concentrate on how such sets may be constructed.

%Let $\cir(a_0,a_1,\ldots, a_{n-1})$ denote the circulant $n\ti n$ matrix with
%first row $(a_0,a_1,\ldots, a_{n-1})$.

\subsection{Systems from orthonormal bases}\label{joint}

Let $V= F^n$. Assume $F^n$ has an inner product so that the notion of
orthonormal basis exists in $V$ and its subspaces. In $\R^n$ and
$\cc^n$ the 
inner product is $vu^*$ for row vectors $v,u$ where $^*$ denotes complex
conjugate transpose; in $\R^n$, $w^* = w\T$, the transpose of $w$.

Suppose now $V= V_1 \oplus V_2 \oplus \ldots \oplus V_k$ is any
direct decomposition of $V$. Let $P_i$ denote the projection of $V$ to
$V_i$. Then $P_i$ is a linear transformation on $V$ and (i) $1= P_1+P_2+
\ldots +P_k$; (ii) $P_i^2= P_i$; (iii) $P_iP_j= 0, i\neq j$.

Thus $\{P_1, P_2, \ldots, P_k\}$ is complete orthogonal set of idempotents.
If each $P_i$ is an orthogonal projection then this set is a complete
symmetric orthogonal set of idempotents. 

 The matrix of $P_i$ may be obtained as follows when $P_i$ is an
 orthogonal projection. Let $\{w_1, w_2, \ldots,
w_s\}$ be an orthonormal basis for $V_i$ and consider $w\in V$. Then $w=
v_i+\hat{w}$ where $\hat{w}\in  V_1 \oplus V_2 \oplus \ldots 
\hat{V_i}\ldots \oplus V_k$ and $v_i\in V_i$. Here $\hat{V_i}$ means
omitting that term. Then $P_i: V \to V_i$ is given by $w\mapsto v_i$.
Now $v_i = \al_1w_1+ \al_2w_2 + \ldots + \al_kw_k$. Take the inner
product with $w_j$ to get $\al_j = v_iw_j^* = ww_j^*$.
Hence $P_i: w\mapsto w(w_1^*w_1 + w_2^*w_2+ \ldots + w_s^*w_s)$. Thus
the matrix of $P_i$ is $w_1^*w_1+ w_2^*w_2+ \ldots + w_s^*w_s$. 

On the other hand suppose $\{P_1,P_2, \ldots, P_k\}$ is a complete
symmetric orthogonal set of idempotents in $F_{n\ti n}$. Then $P_i$ defines a
linear map $V\to V$ by $P_i:v\mapsto vP_i$. Let $V_i$ denote the image
of $P_i$. Then it is easy to check that $V=V_1\oplus V_2 \oplus \ldots
\oplus V_k$.  

The case when each $V_i$ has dimension $1$ is worth looking at separately. 
Suppose $\{o_1, o_2, \ldots, o_n\}$ is an orthonormal basis for $F^n$. 
 Such bases come up naturally in unitary matrices. Let $P_i$ denote the
 projection of $F^n$ to the space generated by $o_i$. Then $P=\{P_1, P_2,
 \ldots, P_n\}$ is an orthogonal symmetric complete set of idempotents in the
 space of linear transformations of $F^n$. It is easy to obtain the
 matrices of $P_i$. The
 matrices $P_i$ may be combined and the resulting set is (still)  a 
complete symmetric 
 orthogonal sets  of idempotents. For example $(P_i + P_j)$ ($i\neq j$)
is still idempotent and is the projection of $F^n$ to the space generated by
 $\{o_i,o_j\}$; replace $\{P_i, P_j\}$ by $(P_i+P_j)$ in $P$ and the new
 set is (still) an orthogonal symmetric complete set of idempotents.
Then $\rank(P_i+P_j) = \rank(P_i)+ \rank(P_j)$ also --  see Lemma
\ref{trrank} below.

%As already mentioned the building blocks of Bel and Vaid for paraunitary
%1D matrices may be obtained in this manner
For example  $\{v_1=\frac{1}{3}(2,1,2), v_2= \frac{1}{3}(1,2,-2), v_3=
\frac{1}{3}(2,-2,-1)\}$ is an
orthonormal basis for $\R^3$. The projection matrices are respectively 
$P_1 = v_1\T v_1 = \frac{1}{9}\begin{ssmatrix} 4 & 2 & 4 \\ 2&1&2 \\ 4 &2
			      &4 \end{ssmatrix},  P_2= v_2\T v_2 =
			      \frac{1}{9}
\begin{ssmatrix} 1 & 2 & -2 \\ 2&4&-4 \\
			       -2 &-4 & 4\end{ssmatrix}, 
P_3 = v_3\T v_3 = \frac{1}{9}\begin{ssmatrix}4 & -4&-2 \\ -4 &4 & 2
			      \\ -2 & 2& 1\end{ssmatrix}$.

Thus $\{P_1, P_2, P_3\}$ is a complete symmetric 
orthogonal set of idempotents and
each $P_i$ has rank 1.
Set  $\hat{P}_2 = P_2+P_3$ and then $\{P_1,\hat{P}_2\}$
is a complete symmetric orthogonal set of idempotents also and
$\rank(\hat{P}_2 ) = 2$.   

%Orthonormal  bases in $\cc^n$ are defined but 
Note that the inner
product in $\cc^n$ is $vu^*$ for  row vectors $v,u$ where $^*$ denotes
{\em complex 
conjugate} transposed. For example $\{\frac{1}{\sqrt{2}}(-i, 1),
\frac{1}{\sqrt{2}}(i,1)\}$ is an orthonormal  
basis for $\cc^2$. Projecting then gives the
complete orthogonal  symmetric set of idempotents $\{P_1 =
\frac{1}{2}\begin{ssmatrix} 1 &-i \\ i & 1 \end{ssmatrix}, P_2 =
\frac{1}{2}\begin{ssmatrix}1 &i \\ -i & 1 \end{ssmatrix}\}$.

\subsection{Orthogonal idempotents systems from unitary/paraunitary}\label{jert}
Let $U$ be a unitary or paraunitary $n\ti n$ matrix in variables $\bf
z$ say over $R$. Then the rows
$\{v_1,v_2,\ldots, v_n\} $ of $U$ satisfy $v_iv_i^*=1,\, v_iv_j^* = 0, \,
i\neq j$. 

Define $P_i=v_i^*v_i$ for $i=1,2,\ldots, n$. Then $P_i$ is an $n\ti n$
matrix of rank $1$.
\begin{proposition} $\{P_1, P_2,\ldots, P_n\}$ is a complete symmetric
  orthogonal set of idempotents in $R_{n\ti n}[\bf z, z^{-1}]$.
\end{proposition}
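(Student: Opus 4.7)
The plan is to verify the four defining properties of a complete symmetric orthogonal set of idempotents in turn: symmetry, idempotency, orthogonality, and completeness. Throughout I will exploit the associativity of matrix products and the involution rule $(AB)^* = B^*A^*$ to rearrange the outer products $v_i^* v_i$, together with the paraunitarity relation written row-by-row.

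First I would note that by paraunitarity of $U$, writing $UU^* = 1$ row-by-row gives $v_i v_j^* = \delta_{ij}\cdot 1$, where $1$ here is the scalar identity in $R[\bf z, z^{-1}]$. Symmetry is then immediate from $(v_i^* v_i)^* = v_i^* v_i^{**} = v_i^* v_i$. Idempotency follows by associating: $P_i^2 = v_i^*(v_i v_i^*)v_i = v_i^*\!\cdot\! 1\!\cdot\! v_i = P_i$. Orthogonality is the same calculation with $i\neq j$: $P_iP_j = v_i^*(v_iv_j^*)v_j = 0$.

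For completeness I would compute $\sum_{i=1}^n P_i = \sum_i v_i^* v_i$ and identify this sum as $U^* U$. Indeed, entry $(j,k)$ of $v_i^* v_i$ is $(v_i^*)_j (v_i)_k = \overline{U_{ij}}\, U_{ik}$, and summing over $i$ produces the $(j,k)$-entry of $U^* U$. It remains to show $U^*U = 1$. This is where I expect the only real subtlety: paraunitarity is usually stated as the one-sided relation $UU^* = 1$, but since $U$ is square one upgrades this to a two-sided inverse. Concretely, $U \in R_{n\times n}[\bf z, z^{-1}]$ embeds in the ring of matrices over the field of fractions $\mathrm{Frac}(R)(\bf z)$ (when $R$ is an integral domain; the general case can be handled by using that a square matrix with a right inverse in any ring has that as a two-sided inverse once one knows it is invertible, or by invoking the standard fact that paraunitary matrices as defined in the introduction satisfy both $UU^* = 1$ and $U^*U = 1$). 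Either way, $U^* U = 1$ follows, so $\sum_i P_i = 1$.

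Thus the hard part is really just the completeness step and the justification that the one-sided paraunitary identity implies the two-sided one in the Laurent-polynomial matrix ring; the other three properties are one-line manipulations. Once $U^*U=1$ is in hand, all four axioms are verified and the $P_i$ form the required complete symmetric orthogonal set of idempotents in $R_{n\times n}[\bf z, z^{-1}]$.
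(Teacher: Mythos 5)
Your proof is correct, and the three ``easy'' verifications (symmetry, idempotency, orthogonality via associating $v_i^*(v_iv_j^*)v_j$) are exactly what the paper does. Where you genuinely diverge is the completeness step. The paper sets $A=P_1+\cdots+P_n$, observes $Av_i^*=v_i^*$ for each $i$, and concludes $A=I_n$ because $A$ then has $n$ linearly independent eigenvectors for the eigenvalue $1$. You instead identify $\sum_i P_i$ as the matrix product $U^*U$ and reduce completeness to the statement $U^*U=1$, i.e.\ to upgrading the one-sided relation $UU^*=1$ to a two-sided one. These are really the same fact viewed from two angles: the paper's argument needs the columns $v_i^*$ of $U^*$ to be linearly independent (equivalently, $U^*$ invertible over the field of fractions), which is precisely the ingredient you invoke to pass from a right inverse to a two-sided inverse. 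Your version has the merit of making explicit where the only nontrivial content lies --- and of flagging that some hypothesis on $R$ (integral domain, or working over $\cc$ as the paper mostly does) is needed to run the fraction-field argument, a point the paper leaves implicit when it asserts linear independence and jumps from ``$n$ independent eigenvectors for eigenvalue $1$'' to $A=I_n$. The paper's version is slightly more self-contained in that it never names $U^*U$ and works directly with the action of $A$ on the spanning set $\{v_1^*,\ldots,v_n^*\}$. Either route is acceptable; neither is fully rigorous for an arbitrary ring $R$, but both are sound in the settings the paper actually uses.
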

\begin{proof} It is easy to check that $P_i^*=P_i, 
\, P_iP_i=P_i, \, P_iP_j=0,  \, i\neq j$. It is necessary to show that the set
is complete.

Let $A=P_1+P_2+\ldots + P_n$. Note that $P_iv_i^* = v_i^*, P_iv_j^*=0,
\, i \neq j$. Then $Av_i^* = v_i^*$.  Thus $A$ has $n$ linearly
independent eigenvectors corresponding to the eigenvalue $1$. Hence
$A=I_n$.
\end{proof}

\subsubsection{Diagonals} 
In $R_{n\ti n}$  
let $E_{ii}$ denote the matrix with $1=1_R$ 
on the (diagonal) $(i,i)$ 
position and $0$ elsewhere. Then $W=\{E_{11}, E_{22}, \ldots,
E_{nn}\}$ is a complete symmetric 
orthogonal set of idempotent matrices. This is a special case of 
section \ref{joint} but is worth mentioning separately; paraunitary 
matrices have been  designed from $W$ which, although not
generally useful in themselves directly, may  be combined
with other designed paraunitary matrices with which they do not
commute in general.

\subsection{Group rings}\label{grring}
Group rings are a neat way with which to  obtain complete orthogonal symmetric
sets of idempotents. These systems  have nice structures from which
properties of the paraunitary matrices designed may be deduced. 
Let $w= \sum_{g\in G} \alpha_gg$ be an element in the  group ring
$FG$ and  $W$ denotes the matrix of $w$ as
defined in \cite{hur3} and \cite{hur1}. This matrix $W$ depends on the
listing of the elements of $G$ and relative to this listing $\phi: w\mapsto W$
is an embedding of $FG$ into the ring of $n\ti n$ matrices, $F_{n\ti n}
$, over $F$ where $n=|G|$.
The {\em transpose}, $w\T$, of
$w$ is $w\T = \sum_{g\in G} \alpha_g g^{-1}$. Note that the matrix of
$w\T$ is then $W\T$. 

Over $\cc$ define ${w}^* = \sum_{g\in G} \ov{\al_g}g^{-1}$ where bar
denotes complex conjugate.  Note then that for a group ring element
 $w$ with corresponding matrix $W$ the matrix of $w^*$ is indeed $W^*$.

Say an element $w({\bf{z}})\in FG[{\bf{z}}]$ is a paraunitary group ring
element if and only if $w({\bf{z}})w^*({\bf{z^{-1}}}) =1$ and this
happens if and only if  the corresponding $W({\bf{z}})\in F_{n\ti
n}[{\bf{z}}]$ is a paraunitary matrix (where $n = |G|$). The $W({\bf{z}})$
obtained from $w({\bf{z}})$ in this case is termed a group ring
paraunitary matrix. 

Group rings are a rich source of complete sets of orthogonal
idempotents and group rings have a rich structure within which
properties of the paraunitary matrices so designed may be obtained. 
%and this enriches the families of paraunitary matrices now
%available. 

The theory brings representation theory and  character theory in group
rings into play. The orthogonal
idempotents are obtained from the conjugacy classes and character
tables, see e.g.\ \cite{seh}. The orthogonal sets of idempotents 
 depend on the field under consideration and classes of 
%although for paraunitary matrices the required field is often 
 paraunitary matrices over different fields such as $\mathbb{Q}$,
$\mathbb{R}$ or finite fields  are also obtainable.
% and in these cases
%there is no problem with complex conjugates.
% In many cases $e_i =
%\ov{e}\T$ and indeed this may be the situation in all cases.

The primitive central idempotents of the complex group algebra
$\mathbb{C}G$ are given by $e(\chi)= \frac{\chi(1)}{|G|} \sum_{g\in G}
\chi(g^{-1})g$ where $\chi$ runs through the irreducible (complex)
characters $\chi$ of $G$, see \cite{seh}, Theorem 5.1.11 page 185, where
the $e_i$ are expressed as $e_i= \frac{\chi_i(1)}{|G|} \sum_{g\in
G}\chi_i(g^{-1})g$.
  
The idempotents from group rings are automatically symmetric. 
\begin{theorem}\label{thm:third} For the group idempotents $e_i$,
 $e_i^* = e_i$. 
\end{theorem}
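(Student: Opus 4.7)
The plan is to unpack the definition of the involution $*$ on the group ring and then apply the standard character-theoretic identity $\overline{\chi(g)} = \chi(g^{-1})$.

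First I would write out $e_i = \frac{\chi_i(1)}{|G|}\sum_{g\in G}\chi_i(g^{-1})\,g$ explicitly, then apply the involution $w^* = \sum_{g\in G}\overline{\alpha_g}\,g^{-1}$ termwise. Since $\chi_i(1) = \dim \chi_i$ is a positive integer and $|G|$ is a positive integer, the scalar factor is real and unchanged by conjugation. Thus
\[
e_i^* = \frac{\chi_i(1)}{|G|}\sum_{g\in G}\overline{\chi_i(g^{-1})}\,g^{-1}.
\]
Reindexing by $h = g^{-1}$ turns this into
\[
e_i^* = \frac{\chi_i(1)}{|G|}\sum_{h\in G}\overline{\chi_i(h)}\,h.
\]

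Next I would invoke the standard fact that for any character $\chi$ of a finite group, $\overline{\chi(h)} = \chi(h^{-1})$. This follows because $\chi(h)$ is a sum of eigenvalues of $\rho(h)$ for a unitary representation $\rho$; these eigenvalues are roots of unity (since $h$ has finite order), and complex conjugation sends a root of unity to its inverse, which is the corresponding eigenvalue of $\rho(h^{-1})$. Substituting gives
\[
e_i^* = \frac{\chi_i(1)}{|G|}\sum_{h\in G}\chi_i(h^{-1})\,h = e_i,
\]
which is exactly the expression for $e_i$.

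There is no real obstacle here: the only non-trivial ingredient is the identity $\overline{\chi(g)} = \chi(g^{-1})$, which is part of the background character theory cited in the excerpt (Sehgal \cite{seh}), and the proof is essentially a one-line change of variable after unwinding the definitions. If one wished to avoid citing the character identity, one could instead argue via Wedderburn decomposition: $e_i$ is the unit of a simple matrix summand of $\mathbb{C}G$, and the involution $*$ (being the extension of $g \mapsto g^{-1}$) preserves each Wedderburn component because conjugate representations and dual representations coincide under the unitary structure on $\mathbb{C}G$; hence $e_i^*$ is still a central idempotent supported on the same simple component, forcing $e_i^* = e_i$.
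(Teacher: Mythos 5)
Your proof is correct and follows essentially the same route as the paper: both apply the involution termwise to the explicit formula $e_i = \frac{\chi_i(1)}{|G|}\sum_{g\in G}\chi_i(g^{-1})g$ and conclude via the character identity $\overline{\chi(g)} = \chi(g^{-1})$. Your version simply spells out the reindexing $h = g^{-1}$ that the paper leaves implicit.
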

\begin{proof} This is a matter of showing that the coefficients 
  $g$ and $g^{-1}$ in each $e_i$ are complex conjugates of one
  another. But this
  is immediate  as it is well-known that $\chi(g^{-1}) = \overline{\chi({g})}$,
  and thus the result follows from the expression for $e_i$ given above.
\end{proof} 

Let $E_i$ denote the matrix of $e_i$ as per an embedding of the group
ring into the ring of matrices as for example in \cite{hur3}. 
\begin{corollary}\label{idem3}
Let $\{e_1,e_2, \ldots,e_k\}$ be a complete set of orthogonal idempotents in
 a group ring and define $U(z)=\di\sum_{i=1}^{k}\pm E_iz^{t_i}$ where
 the $t_i$ are non-negative integers.  
 Then $U(z)$ is a paraunitary matrix.
\end{corollary}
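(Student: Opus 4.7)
The plan is to combine Theorem \ref{thm:third} with Proposition \ref{first} (or its symmetric sharpening, Corollary \ref{second}) and then transport the identity across the group ring embedding $\phi: FG \to F_{n\ti n}$.

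First I would work inside the group ring. Set $u(z) = \sum_{i=1}^k \pm e_i z^{t_i}$. By hypothesis $\{e_1,\dots,e_k\}$ is a complete orthogonal set of idempotents, so Proposition \ref{first} gives $u(z)u(z^{-1}) = 1$. Theorem \ref{thm:third} tells us the group ring idempotents are symmetric, i.e. $e_i^* = e_i$. Since the signs $\pm 1$ are fixed under $^*$ and the involution sends $z$ to $z^{-1}$, we obtain $u^*(z^{-1}) = u(z^{-1})$, and hence $u(z)u^*(z^{-1}) = 1$ (this is precisely Corollary \ref{second}). Thus $u(z)$ is a paraunitary element of $FG[z]$.

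Next I would push the identity through the embedding. The map $\phi: FG \to F_{n\ti n}$ is a ring homomorphism which, as noted in Section \ref{grring}, sends $w^* \mapsto W^*$; it extends coefficient-wise to $FG[z] \to F_{n\ti n}[z]$ in a way that commutes with the substitution $z \mapsto z^{-1}$. Applying $\phi$ to the equation $u(z)u^*(z^{-1}) = 1$ therefore yields $U(z)U^*(z^{-1}) = 1_{n\ti n}$, where $U(z) = \sum_{i=1}^k \pm E_i z^{t_i}$. This is exactly the paraunitary condition for $U(z)$.

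There is essentially no obstacle: everything is already packaged in the earlier results, and the corollary is just the matrix translation of the group ring statement via $\phi$. The only point worth verifying carefully is that $\phi$ intertwines the two involutions — that the matrix of $w^*$ is $W^*$ — but this is already recorded in Section \ref{grring}, so the argument is a short composition of Theorem \ref{thm:third}, Corollary \ref{second}, and the embedding.
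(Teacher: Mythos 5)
Your proposal is correct and follows exactly the route the paper intends: the corollary is stated without a separate proof precisely because it is the composition of Theorem \ref{thm:third} (symmetry of the group ring idempotents), Proposition \ref{first}/Corollary \ref{second} (the paraunitary identity for symmetric complete orthogonal sets), and the $^*$-preserving embedding $\phi:FG\to F_{n\ti n}$ recorded in Section \ref{grring}. Nothing is missing.
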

\begin{corollary}\label{idem4}
Let $\{e_1,e_2, \ldots,e_k\}$ be a complete set of orthogonal idempotents in
 a group ring over $\cc$ and define $U(z)=\di\sum_{i=1}^{k}\al_i
 E_iz^{t_i}$ where the $t_i$ are non-negative integers and 
 $|\al_i|=1 $. 
 Then $U(z)$ is a paraunitary matrix.
\end{corollary}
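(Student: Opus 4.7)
The plan is to reduce this corollary to Proposition \ref{second2}, which was already proved for an arbitrary complete symmetric orthogonal set of idempotents in $F_{n\times n}$ with coefficients $\alpha_i$ of modulus $1$. So the only real task is to verify that the matrices $\{E_1, E_2, \ldots, E_k\}$ obtained from the group-ring idempotents $\{e_1, \ldots, e_k\}$ form such a set inside $\cc_{n\times n}$ (with $n = |G|$).

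First I would invoke the embedding $\phi : \cc G \hookrightarrow \cc_{n\times n}$, $w \mapsto W$, described in Section \ref{grring}. Since $\phi$ is a ring homomorphism, it carries the identities $e_i^2 = e_i$, $e_i e_j = 0$ for $i\neq j$, and $1 = \sum_i e_i$ to the corresponding matrix identities $E_i^2 = E_i$, $E_i E_j = 0$, and $I_n = \sum_i E_i$. Thus $\{E_1, \ldots, E_k\}$ is a complete orthogonal set of idempotents in $\cc_{n\times n}$.

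Next I would use Theorem \ref{thm:third} to conclude symmetry: since $e_i^* = e_i$ in $\cc G$ and the embedding $\phi$ is compatible with the involution $^*$ (as noted at the start of Section \ref{grring}, the matrix of $w^*$ is $W^*$), one gets $E_i^* = E_i$ for each $i$. Hence $\{E_1, \ldots, E_k\}$ is a complete \emph{symmetric} orthogonal set of idempotents in $\cc_{n\times n}$.

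Finally, since $|\alpha_i| = 1$ (so $|\alpha_i|^2 = 1$) and the $t_i$ are non-negative integers, Proposition \ref{second2} applies directly to $W(z) = \sum_{i=1}^{k} \alpha_i E_i z^{t_i}$ and yields that $U(z) = W(z)$ is paraunitary. There is no real obstacle here; the only point requiring any care is confirming that the embedding $\phi$ respects the involution so that group-ring symmetry transfers to matrix symmetry, and this was already established in the setup of Section \ref{grring}.
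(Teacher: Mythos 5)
Your proposal is correct and follows exactly the route the paper intends: the corollary is stated as an immediate consequence of Theorem \ref{thm:third} (symmetry of the group ring idempotents, transported through the involution-compatible embedding $w\mapsto W$) combined with Proposition \ref{second2} for modulus-one coefficients. The paper gives no separate written proof, but your reconstruction is precisely the argument it relies on.
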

 
The formula for the $\{e_i\}$ as given above  (taken from \cite{seh}) may be
used to construct complete orthogonal sets of idempotents. 
%Representations
%for particular groups including abelian groups,
%dihedral groups and the symmetric groups have nice 
The Computer Algebra packages GAP and Magma can construct character tables
and conjugacy classes from which complete sets of orthogonal
idempotents in group rings may be obtained. The literature contains 
other numerous methods for finding complete (symmetric) orthogonal 
sets of idempotents in group rings. 
%The literature also contains
%other methods for constructing complete orthogonal sets of idempotents
%in group rings.

In general the
paraunitary matrices designed using orthogonal sets of idempotents  in
the group ring over $\cc$ have complex coefficients 
but specialising and combining idempotents allows the 
design so that the coefficients may be in  $\mathbb{R}$, the real numbers, or in 
$\mathbb{Q}$, the rational numbers. When the group ring of the 
symmetric group  $S_n$ is used the paraunitary matrices derived by these
methods all have
coefficients automatically in $\mathbb{Q}$ and when the group ring of 
dihedral group $D_{2n}$ is used the coefficients are in $\mathbb{R}$. In general
idempotents occur in complex conjugate pairs  
and these  may be combined to give real coefficients resulting in
paraunitary matrices with real coefficients.

Most of the results hold in the case when the characteristic of $F$ does not
divide the order of $G$; in this case 
this  means that the characteristic of $F$ does not
divide the size $n$ of the $(n\times n)$ matrices under consideration. In
these cases also it may be necessary to extend the field to include
roots of certain polynomials.

\subsection{Tensor products} 
It is easy to check that the tensor product of paraunitary matrices is
also a paraunitary matrix. If $P=QR, S=TV$ then $P\otimes S = (QT)\otimes
(RS)$ when the products $QT,RS$ can be formed.
% If then $P,S$ are
%paraunitary matrices which can be separated then so can the tensor
%product $P\otimes Q$. 

Complete orthogonal sets of idempotents may be
designed using products of these sets. Suppose $\{e_0,e_2, \ldots, e_k\}$ is a
complete orthogonal set of idempotents in $F_{n\ti n}$ and $\{f_0,f_1,
\ldots, f_s \}$ is a complete orthogonal set of idempotents in $F_{k\ti
k}$.
Then $\{e_i\otimes  f_j \, | \, 0\leq i \leq k, 1\leq j \leq s\}$ is a complete
orthogonal set of idempotents in $F_{nk \ti nk}$. Here $\otimes$ denotes
tensor product. If both $\{e_0,e_2, \ldots, e_k\}$ and $\{f_0,f_1,
\ldots, f_s \}$ are symmetric then so is the resulting tensor product
set. The details are omitted.

%Thus if $P=Q\oprod R$ and $Q,R$ are  designed using orthogonal sets of
%idempotents then $P$ is designed from an orthogonal set of
%idempotents. Thus in particular if 
If $\{ e_i \, | \, 1\leq i \leq k\}$ and $\{ f_j \, | \, 1\leq j \leq
s \}$ are 
complete orthogonal sets of idempotents within group rings 
$FG, FH$ respectively 
 then $\{e_if_j \, | \, 1\leq i \leq k, 1\leq j \leq s\}$ is 
a complete orthogonal set of matrices in $F(G\cross H)$.
Suppose $e_i \mapsto E_i, f_j \mapsto F_j$ gives an embedding into
matrices, then 
$e_if_j \mapsto E_i\otimes F_j$ gives an embedding into $F(G\cross H)$;
 this may be deduced from \cite{hur3} and details are
omitted. 

%Those of the same size, $n\ti n $,  
%may be combined with one another,
%including paraunitary matrices 
% formed using idempotents from  different group rings (with
%groups of the same order) 
%and also with paraunitary matrices
%produced from $\{E_{11}, E_{22}, \ldots, E_{nn}\}$.

\subsection{Examples of paraunitary matrix from orthonormal bases}
\begin{enumerate} \item  
The complete orthogonal symmetric systems of idempotents 
$P_1 =\frac{1}{9}\begin{ssmatrix} 4 & 2 & 4 \\ 2&1&2 \\ 4 &2
			      &4 \end{ssmatrix},  P_2=
			      \frac{1}{9}
\begin{ssmatrix} 1 & 2 & -2 \\ 2&4&-4 \\
			       -2 &-4 & 4\end{ssmatrix}, 
P_3 =  \frac{1}{9}\begin{ssmatrix}4 & -4&-2 \\ -4 &4 & 2
			      \\ -2 & 2& 1\end{ssmatrix}$
were obtained in section \ref{joint}. Then $W(z)= P_1z^2+P_2z+P_3z^3$ is a
paraunitary matrix. 
\item Let $z=e^{i\theta}$ in $W$ in 1.\ gives a unitary matrix, $T$
  say. The rows of $T$ form an orthonormal basis for $\cc^3$. These
  rows may then be used to form a complete symmetric
  orthogonal set of idempotents from which paraunitary matrices may be
  constructed. This process could be continued.
\item In $\cc C_3$, where $C_3$ is the cyclic group of order $3$, 
the orthogonal complete set of
idempotents formed are $Q_1 = \frac{1}{3}\begin{ssmatrix}
						      1&1 &1 \\ 1&1 
&1 \\ 1&1 &1 \end{ssmatrix}, Q_2 =\frac{1}{3}\begin{ssmatrix} 1& \om &
					      \om^2 \\ \om^w &
1 & \om \\ \om & \om^2 & 1 \end{ssmatrix}, 
 Q_3 = \frac{1}{3}\begin{ssmatrix} 1 & \om ^2 &\om \\ \om & 1 & \om^2 \\
		   \om^2&\om & 1 \end{ssmatrix}$ where $\om$ is a
		   primitive 3rd root of unity. 

Then $Q(z) = Q_1+Q_2z^3+Q_3z^2$ is a paraunitary matrix.
\item Give values of modulus $1$ to $z$ in $Q(z)$ above  and get a
  unitary matrix $R$. Use the rows of $R$ to form a further complete
  symmetric set of idempotents from which paraunitary matrices may be formed.
\item  Combine $Q(z)$ in 3.\ 
with $W(z)$ in 1.\ to give for example the paraunitary matrix
       $Q(z)W(z)Q(z)$.    
% \item 
 
\end{enumerate}
%\subsection{Examples of paraunitary matrices from group
% rings}\label{exagr} 
We give some examples from orthogonal sets
of idempotents derived from group rings. The group ring idea 
 is used later as a prototype in which to extend the method for the
 design of non-separable paraunitary matrices. The complete
 orthogonal sets of idempotents obtained from group rings are
 automatically symmetric as noted in  \thmref{third}.

Recall that  $\cir(a_0,a_1,\ldots, a_{n-1})$ denotes 
the circulant $n\ti n$ matrix with
first row $(a_0,a_1,\ldots, a_{n-1})$.

Consider $\cc C_n$  where $C_n$ is a cyclic group of order $n$.

\begin{enumerate}
\item 
 When $n=2$ the (primitive) orthogonal set of
 idempotents consists of $\{e_0= 1/2(1+g), e_1 = 1/2(1-g)\}$, where 
 $g$ generates $C_2$. Thus paraunitary matrices may be formed from 
$E_0=\frac{1}{2}\left(\begin{array}{cc}1&1\\ 1&1 \end{array}\right)$ and
$E_1=\frac{1}{2}\left(\begin{array}{cc}1&-1\\ -1&1 \end{array}\right)$
giving for example $\frac{1}{2}\left(\begin{array}{cc}1+z&1-z\\ 1-z&1+z
\end{array}\right)$.  (Looks familiar?) %  and these are familiar, see \cite{strang}.
 \item  These may be combined with paraunitary matrices formed from $E_{11} =
\begin{pmatrix}1 &0 \\ 0&0\end{pmatrix}, E_{22}=
\begin{pmatrix}0&0\\ 0&1\end{pmatrix}$.
Note that $E_{11},E_{22}$ do not commute with $E_0,E_1$.
For example the following is a  paraunitary matrix:

$$\begin{pmatrix}1&0\\
       0&z\end{pmatrix}\frac{1}{2}\begin{pmatrix}z+z^2
&z-z^2\\ z-z^2& z+z^2\end{pmatrix}\begin{pmatrix}z^2&0 \\
						0&z^3
					       \end{pmatrix}
\frac{1}{2}\begin{pmatrix}z^2+z^3&z^2-z^3\\
		 z^2-z^3&z^2+z^3\end{pmatrix}$$

They may also be combined with paraunitary matrices formed from
orthonormal bases as in section \ref{joint} such as $\{P_1 =
\frac{1}{2}\begin{ssmatrix} 1 &-i \\ i & 1 \end{ssmatrix}, P_2 =
\frac{1}{2}\begin{ssmatrix}1 &i \\ -i & 1 \end{ssmatrix}\}$.

The determinant of these matrices which are powers of $z$ may be obtained
from Theorem \ref{det} below.   
\item 
The primitive orthogonal idempotents for a cyclic group are related to
the Fourier Matrix.
% which constitutes the character table of the
%cyclic group.

\item In $\cc C_4$, for example, the orthogonal primitive idempotents are 
$e_1 = \frac{1}{4}(1+a + a^2 +a^3),
e_2 =\frac{1}{4}( 1 + \om a+\om^2 a^2+\om^3 a^3), e_3 = \frac{1}{4}(1
- a + a^2  -a^3),
e_4 = \frac{1}{4}(1+\om^3a+\om^2 a^2 + \om a^3)$ from which $4\ti 4$
paraunitary matrices may be constructed. Here
$\om$ is a primitive $4^{th}$ root of unity and in this case  $\om^2 = -1$.  

Notice that $e_i = \ov{e}_i\T$ as could be deduced from  \thmref{third}.
%and so
%indeed we get paraunitary matrices. 

\item Combine the $e_i$ to get real sets of orthogonal
idempotents. Note that it is simply enough to combine the conjugacy
classes of $g$ and $g^{-1}$. In this case then we get 

$\hat{e}_1 = e_1 = \frac{1}{4}(1+a+a^2+a^3), \hat{e}_2 = e_2+e_4 = 
\frac{1}{2}(1 -a^2), \hat{e}_3 = e_3 = \frac{1}{4}(1-a+a^2-a^3)$, which can
then be used to construct real paraunitary $4\ti 4$ matrices.

\item Using $C_2\ti C_2$ gives different paraunitary matrices. Here the set
of primitive orthogonal idempotents consists of 
 $f_1 = \frac{1}{4}(1+a+b+ab), f_2 = \frac{1}{4}(1- a +b-ab), f_3=
\frac{1}{4}(1-a-b+ab), f_4 = \frac{1}{4}(1+a- b-ab)$
and the paraunitary matrices derived are all real. 

\item 

The paraunitary matrices produced from $C_4$ from $C_2\ti C_2$ and from
$E_{11}, E_{22}, E_{33}, E_{44}$ may then be combined to produce further
($4\ti 4$) paraunitary matrices. So for example the following
$4\ti 4$ is a paraunitary matrix:
$$
(E_1+E_2z+E_3z^3+E_4z^2)(E_{11}+E_{22}z+E_{33}z^3+E_{44}z^2)(F_1z+F_2z^2+F_3z^3
+ F_4z^2)$$
Again the determinant of the matrix may be obtained from Theorem
\ref{det}. 

The $E_i,F_j$ are derived from the $e_i,f_j$ (as per \cite{hur3}) 
so for example 
$E_2 =\frac{1}{4}\cir(1,\om,\om^2,\om^3)$, $F_3=
\frac{1}{4}\left(\begin{smallmatrix}1&-1&-1&1\\ -1&1&1&-1\\ -1&1&1&-1\\ 1&-1&-1&1
\end{smallmatrix}\right) $.
\end{enumerate}

\subsection{Get real} By combining complex conjugate idempotents 
in a complete orthogonal sets of complex idempotents,  
 real paraunitary matrices may be obtained. We illustrate this with an example.

Suppose $\{e_0,e_1,e_2, e_3,e_4,e_5\}$ is the complete 
set of  primitive idempotents in $\cc C_6$. Here then $e_i=\frac{1}{6}(1
+\om^ig+\om^{2i}g^2 + \om^{3i}g^3+\om^{4i}g^4+ \om^{5i}g^5)$ where $\om
= e^{2i\pi/6 }$ is a
primitive $6^{th}$ root of unity and $C_6$ is generated by $g$.
 
Then $\ov{e_0}=e_0, \ov{e_1}=e_5, \ov{e_2}=e_4, \ov{e_3}= e_3$.

Let $\theta = 2\pi/6$. Note that $\cos(\theta)=\cos(5\theta),
\cos(2\theta)= \cos(4\theta)$. 
Now combine $e_1$ with $e_5$ and $e_2$ with $e_4$ to get 
$\hat{e_1} = \frac{2}{6}(1 + \cos (\theta)g+ \cos (2\theta)g^2+
\cos(3\theta)g^3+ \cos(4\theta)g^4+ \cos(5\theta)g^5)$ and $\hat{e_2}=
\frac{2}{6}(1+ \cos(2\theta)g+\cos(2\theta)g^2)+\cos(2\theta)g^3+
\cos(2\theta)g^4+ \cos(2\theta)g^5$. This gives the real orthogonal
complete set of idempotents $\{e_0,\hat{e_1},\hat{e_2},e_3\}$ from which
real paraunitary matrices may be constructed. The ranks of the
idempotents and determinants of the paraunitary matrices formed may be
deduced from Lemma \ref{trrank} and Theorem \ref{det}.
 
\subsection{Symmetric, dihedral  groups}{\label{symm6}} 
%\section{Dihedral, symmetric groups.}

%In the direct product examples everything is commutative. 

Let $D_{2n}$ denote the dihedral group of order $2n$. As every element
in $D_{2n}$ is conjugate to its inverse, the complex characters of $D_{2n}$ are
real. Thus the paraunitary matrices obtained directly from the complete
orthogonal set of idempotents in  $\cc D_{2n}$
have real coefficients. The characters $D_{2n}$ 
are contained in an
extension of  $\mathbb{Q}$ of degree $\phi(n)/2$ and  this is 
$\mathbb{Q}$ only for $2n\leq 6$.   %Thus it is straight-forward to
				%construct paraunitary 
%matrices from these dihedral groups. %The characters of $S_n$ are
%rational. 

Let $S_n$ denote the symmetric group of order $n$. Representations and
orthogonal  
idempotents of the symmetric group are known; see for example
\cite{curtis}. The characters of
$S_n$ are rational and thus the paraunitary matrices produced directly
from the complete orthogonal set of idempotents in $\cc S_n$ have rational coefficients.

The paraunitary matrices formed from different group rings (with same
size group) may be combined to form further paraunitary matrices; these
in general will not commute.  
%The idempotents have rational coefficients and thus the
%paraunitary matrices constructed from these have rational coefficients.

We present an example here from $S_3$, the symmetric group on 3
letters. (Note that $S_3=D_6$.)

Now $S_3 = \{1, (1,2), (1,3), (2,3), (1,2,3), (1,3,2)\}$ where these are
cycles. We also use this listing of $S_3$ when constructing matrices. 

There are three conjugacy classes:
$K_1 = \{1\}$; $K_2 = \{(1,2), (1,3)\},  (2,3)$; $K_3 =\{
(1,2,3),(1,3,2)\}$.

Define \\ $\hat{e}_1 = 1 + (1,2)+(1,3)+(2,3)+(1,2,3)+(1,3,2)$,
\\ $\hat{e}_2 = 1 - \{ (1,2) + (1,3)+(2,3)\} + (1,2,3)+(1,3,2)$,\\
$\hat{e}_3 = 2 - \{(1,2,3) + (1,3,2)\}$,

and $e_1 = \frac{1}{6}\hat{e}_1; e_2 = \frac{1}{6}\hat{e}_2; e_3 =
\frac{1}{3}\hat{e}_3$. Then $\{e_1,e_2,e_3\}$ form a complete
  orthogonal set of  idempotents and may be used to construct
  paraunitary matrices. 

The $G$-matrix of $S_3$ (see \cite{hur3}) is 
$\left(\begin{smallmatrix}1 & (12)&(13)&(23)&(123)&(132)\\
  (12)&1&(132)&(123)&(23)&(13) \\ (13) &(123) &1 & (132) &(12)&(23) \\
  (23) &(132)&(123)&1&(13)&(12) \\ (132)& (23)&(12)&(13)&1&(123) \\
  (123)&(13)&(23)&(21)&(132)&1\end{smallmatrix}\right).$

Thus the matrices of $e_1,e_2,e_3$ are respectively 

$$E_1=\frac{1}{6} \left(\begin{smallmatrix} 1 & 1&1&1&1&1\\
  1&1&1&1&1&1 \\ 1&1&1&1&1&1 \\ 1&1&1&1&1&1 \\ 1&1&1&1&1&1
  \\ 1& 1&1&1&1&1 \end{smallmatrix}\right),
E_2=\frac{1}{6} \left(\begin{smallmatrix}1 & -1&-1&-1&1&1\\
  -1&1&1&1&-1&-1 \\ -1&1&1&1&-1&-1 \\ -1&1&1&1&-1&-1 \\ 1&-1&-1&-1&1&1
  \\ 1& -1&-1&-1&1&1 \end{smallmatrix}\right),
E_3=\frac{1}{3} \left(\begin{smallmatrix} 2 & 0&0&0&-1&-1\\
  0&2&-1&-1&0&0 \\ 0&-1&2&-1&0&0 \\ 0&-1&-1&2&0&0 \\ -1&0&0&0&2&-1
  \\ -1& 0&0&0&-1&2 \end{smallmatrix}\right).$$

Note that $E_1,E_2$ have $\rank 1$ and that $E_3$ has $\rank 4$; 
the proof for the  ranks of these $E_i$ in general  
is contained in Lemma  \ref{trrank}.%in
%$E_3$ the 1st row + 5th row + 6th row is zero as is 2nd + 3rd + 4th.  

Thus for example the following are paraunitary matrices:

$E_2+E_1z+E_3z^2$, $E_3+(E_1+E_2)z$, $E_1+E_2 + E_3z^2$.

The paraunitary matrices formed from these idempotent matrices may then
be combined with paraunitary matrices formed using 
complete orthogonal idempotents
obtained from $\cc C_6$ and ones using $\{E_{11}, E_{22}, E_{33}, E_{44},E_{55},
E_{66}\}$.  

Let $\{f_1, \ldots, f_6\}$ be the orthogonal idempotents from $\cc C_6$.
Let $w_1=\sum_{i=1}^3E_iz^{t_i},
w_2=\sum_{i=1}^6E_{ii}z^{k_i}, w_3=\sum_{i=1}^6F_iz^{l_i}$. ($F_i$ is
the matrix of $f_i$.) Then
products of $w_1,w_2,w_3$ are paraunitary matrices. Note that the $w_i$
do not commute.

\subsection{Finite fields}\label{finite}
Here we consider  constructing examples of complete symmetric 
sets of orthogonal
idempotents over finite fields.

Suppose $\{v_1,v_2,\ldots, v_k\}$ is a orthogonal basis  for $F^k$ under
%the bilinear form 
$(u,v) = uv\T$. Thus $v_iv_j\T = 0$ for $i\neq j$.

Suppose also $(v_i,v_i)= t_i \neq 0$.
Define $P_i=t_i^{-1}v_i\T v_i$ which is a $k\ti k$ matrix. Then 
$P_iP_i\T = t_i^{-1}v_i\T v_i t_i^{-1}v_i\T v_i = t_i^{-1}v\T v_i = P_i$ and 
$P_iP_j\T= t_i^{-1}v_i\T v_i t_j^{-1}v_j\T v_j = 0$ for $i\neq j$.

It also follows that $\di\sum_{j=1}^kP_j= 1$.
To see this consider $A= P_1+P_2+\ldots + P_k$. Then $v_iA = v_iP_i= v_i$ as 
$v_iP_j=0$ for $i\neq j$ and $v_iP_i= v_it^{-1}v_i\T v_i=v_i$. Hence
$v_iA =v_i$.
Let $Q=\begin{pmatrix}v_1 \\ v_2 \\ \vdots \\
	    v_k\end{pmatrix}$. Then $Q$ is non-singular as
	    $\{v_1,v_2,\ldots, v_k\}$ is linearly independent.
Then $QA=Q$ and hence $A=1$. 

Note that in the above we do not need to take the square root of 
elements. 

Another way could be to construct such sets over $\mathbb{Q}$ and when the
denominators do not involve a prime dividing the order of the field it
is then possible to derive complete symmetric orthogonal
sets of idempotents over the finite field.

For example the complete orthogonal symmetric systems of idempotents 
$P_1 =\frac{1}{9}\begin{ssmatrix} 4 & 2 & 4 \\ 2&1&2 \\ 4 &2
			      &4 \end{ssmatrix},  P_2=
			      \frac{1}{9}
\begin{ssmatrix} 1 & 2 & -2 \\ 2&4&-4 \\
			       -2 &-4 & 4\end{ssmatrix}, 
P_3 =  \frac{1}{9}\begin{ssmatrix}4 & -4&-2 \\ -4 &4 & 2
			      \\ -2 & 2& 1\end{ssmatrix}$
were obtained in section \ref{joint}. 

Over a field of characteristic $2$ these come to the trivial set
$\{\begin{ssmatrix} 0&0&0\\ 0&1& 0\\ 0&0&0
\end{ssmatrix},\begin{ssmatrix} 1&0&0\\ 0&0& 0\\ 0&0&0
\end{ssmatrix},\begin{ssmatrix} 0&0&0\\ 0&0& 0\\ 0&0&1 \end{ssmatrix}\}$
of symmetric complete orthogonal set of idempotents.

Over the field $\mathbb{F}_5$ of 5 elements they become (note that here 
$9^{-1} =4$):
$\{\begin{ssmatrix} 1&3&1 \\ 3&4&3 \\ 1& 3&1 \end{ssmatrix},
\begin{ssmatrix} 4&3&2 \\ 3&1& 4 \\ 2&4& 1 \end{ssmatrix},
\begin{ssmatrix} 1 & 4&2 \\ 4&1 & 3 \\ 2 & 3 &4\end{ssmatrix}\}$.

This is a complete symmetric
orthogonal set of idempotents in $\mathbb{F}_5$ which may be checked
independently. 

%%%%%Now $F_5$ can be extended to $F_{5^2}$ in order to obatin a primitive
%3rd root of unity similar to $F_2$ and this can be used to obtain
%(circulant) paraunitary matrices over this field. Then we have
%different complete orthogonal sets of idempotents over $F_5$ which can
%be used in later constructions.

%Examples: 
The following are complete symmetric orthogonal
sets of idempotents over $\mathbb{F}_7$:

 $\{\begin{ssmatrix} 2&1&2\\ 1&4&1 \\ 2&
1&2 
\end{ssmatrix},
\begin{ssmatrix} 4&1&6 \\ 1&2& 5 \\ 6&5& 2 \end{ssmatrix},
\begin{ssmatrix} 2 & 5&6 \\ 5&2& 1 \\ 6 & 1 &4\end{ssmatrix}\}$,
$\{\begin{ssmatrix} 6&5&6 \\ 5&3&5 \\ 6& 5&6 \end{ssmatrix},
\begin{ssmatrix} 5&2&5 \\ 2&5& 2 \\ 5&2& 5 \end{ssmatrix},
\begin{ssmatrix} 4 & 0&3 \\ 0&0 & 0 \\ 3 & 0 &4\end{ssmatrix}\}$.

These different sets 
 may be used to construct paraunitary matrices over $\mathbb{F}_7$ and in
a later section are used to show how to construct as an example  
a non-separable paraunitary matrix over a 
 finite field. %$\mathbb{F}_7$.
 
\subsection{1D building blocks}\label{tub} 

The great factorisation theorem of Belevitch and Vaidyanathan, see
\cite{vaid} (pp. 302-322), is that matrices of the form 
$H(z)= 1 - vv^* + zvv^*$, where $v$ is any unit column vector
($v^*v=1$),  are the
building blocks for 1D paraunitary matrices over $\cc$. 

Consider $F_1= vv^*$ where $v$ is a unit column vector and so 
$v^* v= 1$. Thus $F_1F_1= vv^* vv^*  = vv^* = F_1$ and so $F_1$ is an
idempotent. Hence $\{F_1 = vv^*, F_2 = 1-F_1= 1- vv^*\}$
is a complete symmetric 
orthogonal set of these (two) idempotents with $\rank F_1= 1$
and $\rank F_2=(n-1)$  
where the matrices have size $n\ti n$; see Theorem \ref{trrank} below
for rank result. 
Then $H(z) = F_2+zF_1$ and 
hence the paraunitary 1D matrices are built from complete symmetric 
orthogonal sets of two idempotents, one of which has $\rank 1$ and the
other has $\rank (n-1)$.

\begin{proposition} Let $F$ be a field in which every element has a square
 root. Suppose also an involution $^*$ is defined on the set of
 matrices over $F$. Then $P$ is a symmetric (with respect to $^*$) 
idempotent of $\rank 1$ in $F_{n\ti n}$
if and only if $P= vv^*$ where $v$ is a column
 vector such that $v^* v=1$.  
\end{proposition}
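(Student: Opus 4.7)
The plan has two directions. The reverse direction ($\Leftarrow$) is a routine verification: if $P = vv^*$ with $v^*v = 1$, then $P^2 = v(v^*v)v^* = vv^* = P$ gives idempotency, $P^* = v^{**}v^* = vv^* = P$ gives symmetry (using $(AB)^* = B^*A^*$ and $v^{**} = v$), and the outer-product form has rank $1$ since both factors are nonzero.

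For the forward direction, my plan is to use the rank hypothesis to write $P$ as an outer product, use symmetry to link the two factors, use idempotency to normalize the scalar, and finally invoke the square-root hypothesis to rescale to the unit-norm form. Since $\rank P = 1$, its column space is one-dimensional, so $P = uh$ for some nonzero column $u$ and nonzero row $h$. The condition $P^* = P$ gives $h^* u^* = uh$; the left-hand side is a rank-$1$ matrix whose column space is spanned by $h^*$, while the right-hand side has column space spanned by $u$. Hence $h^* = \alpha u$ for some scalar $\alpha$, whence $h = \alpha^* u^*$ and $P = \alpha^* uu^*$.

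Plugging this form into $P^2 = P$ and using associativity to pull the scalar $u^*u$ outside, I expect to get $\alpha^*(u^*u) \cdot \alpha^* uu^* = \alpha^* uu^*$, so $\alpha^*(u^*u) = 1$. In particular $u^*u \neq 0$, and $\lambda := (u^*u)^{-1} = \alpha^*$ is fixed by the scalar involution, since $(u^*u)^* = u^*u^{**} = u^*u$. The final step is to find a scalar $\beta$ so that $v := \beta u$ satisfies $v^*v = 1$ and $vv^* = P$; a direct computation reduces both conditions to the single equation $\beta\beta^* = \lambda$.

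The main obstacle is producing such a $\beta$ inside $F$. The square-root hypothesis supplies $\mu \in F$ with $\mu^2 = \lambda$; applying $^*$ to $\mu^2 = \lambda = \lambda^*$ gives $(\mu^*)^2 = \mu^2$, so because $F$ is a field $\mu^* = \pm \mu$. In the symmetric case $\mu^* = \mu$, taking $\beta = \mu$ works at once since $\beta\beta^* = \mu^2 = \lambda$. The anti-symmetric case is the only delicate point and can be absorbed by adjusting $\mu$ by a square root of $-1$ (itself furnished by the hypothesis); it does not occur in the standard settings of the paper, namely $\cc$ with conjugate transpose, or $\mathbb{R}$ and finite fields with transpose, where the scalar involution is either trivial or restricts to complex conjugation on the real positive element $u^*u$, in both cases forcing $\mu^* = \mu$ automatically.
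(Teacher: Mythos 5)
Your proposal is correct and follows essentially the same route as the paper: extract an outer-product representation from the rank-$1$ hypothesis, use symmetry to tie the two factors together, use idempotency to identify the normalizing scalar, and use the square-root hypothesis to rescale to $vv^*$ with $v^*v=1$. The only difference is presentational --- the paper writes the matrix out entrywise in terms of its first row ($P_{ij}=b_jb_i^*/b_1$, with $b_1^2+|b_2|^2+\cdots+|b_n|^2=b_1$) and sets $v=\frac{1}{\sqrt{b_1}}(b_1,\ldots,b_n)^*$, whereas you argue coordinate-free; you are in fact slightly more careful than the paper in noting that the chosen square root must be fixed by the involution, a point the paper passes over silently (and which, as you observe, is automatic in all the settings it considers).
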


(Note that `symmetric with respect to $^*$' in the case of $\cc$ in
which $^*$ denotes complex conjugate transposed is normally termed `Hermitian'.)

\begin{proof} If $P=vv^*$ with $v^*v = 1$ then $P$ is a symmetric
 idempotent of $\rank 1$. 

Suppose $P$ is a symmetric idempotent of $\rank 1$ in
 $F_{n\ti n}$. Since $P$ has rank 1 each row is a multiple of any
 non-zero row. Suppose the first row is non-zero and that the first
 entry of this row is non-zero. Proofs for other cases are
 similar. Since $P$ is symmetric it has the form 

$P= \begin{ssmatrix} b_1&b_2 & \ldots & b_n \\ b_2^* & b_2b_2^*/b_1 & \ldots
     & b_nb_2^*/b_1 \\ b_3^* & b_2b_3^*/b_1 & \ldots & b_nb_3^*/b_1 \\ \vdots
     & \vdots & \vdots & \vdots \\ b_n^* & b_2b_n^*/b_1 & \ldots &
     b_nb_n^*/b_1 \end{ssmatrix}$

with $b_1^*=b_1$.

Since $P$ is idempotent it follows that $b_1^2 + |b_2|^2 + \ldots + |b_n|^2
 = b_1$.

Let $v=\frac{1}{\sqrt{b_1}}(b_1, b_2, \ldots, b_n)^*$.
Then $v^* v = \frac{1}{b_1}(b_1^2+|b_2|^2 + \ldots + |b_n|^2) = 1$

and $vv^* = \frac{1}{b_1}\begin{ssmatrix}b_1b_1 & b_1b_2 & \ldots &
			  b_1b_n \\ b_2^*b_1 & b_2^*b_2 & \ldots & b_2^*b_n \\
			  \vdots & \vdots & \vdots & \vdots \\ b_n^*b_1 &
			  b_n^*b_2 &  \ldots & b_n^*
			 b_n\end{ssmatrix}= P$.
\end{proof}  

It is necessary that square roots exist in the field. For example $P=
\begin{ssmatrix} 2 & 1 \\ 1 & 2 \end{ssmatrix}$ over  $\mathbb{F}_3$ 
 is a symmetric idempotent matrix of rank $1$ but cannot be
written in the form $vv\T$; however $2$ does not have a square root 
in $\mathbb{F}_3$. Note that $P_1= 1-P = \begin{ssmatrix}2&2 \\ 2&2\end{ssmatrix}$ and that $\{P,P_1\}$ is a complete orthogonal set of idempotents in
					  $(\mathbb{F}_3)_{2\ti 2}$. 
%Thus 
%$\begin{ssmatrix}2&2 \\ 2&2\end{ssmatrix} + z\begin{ssmatrix} 2 & 1 \\
% 1 & 2 \end{ssmatrix}$ is a paraunitary matrix over $\mathbb{F}_3$ which is not
 %of the Belevitch and Vaidyanathan form.

Over $\mathbb{F}_3$ the following complete symmetric sets of
idempotents are the building blocks
for $2\ti 2$ matrices:
$\{\begin{ssmatrix} 2&1 \\ 1&2  \end{ssmatrix}, \begin{ssmatrix} 2&2 \\
						  2 & 2
						 \end{ssmatrix}\}$,
$\{\begin{ssmatrix}1 & 0\\ 0 & 0 \end{ssmatrix}, \begin{ssmatrix}0& 0 \\
						  0&1\end{ssmatrix}\}$.
Thus the paraunitary matrices $2\ti2 $ matrices over 
$\mathbb{F}_3$ are built from these sets using 
Proposition \ref{second2} and products. %  and then forming 
%products.  
These sets are not of Belevitch and Vaidyanathan form and their result
does not apply here.

%Vaidyanathan and Hoang \cite{vaid}(pp. 302-322)

The 1D building block result  of Belevitch and Vaidyanathan cannot be extended to multidimensions.

%provide a characterisation of 1D paraunitary matrices.  
Group ring 1D
paraunitary matrices and other 1D paraunitary matrices over $\cc$ 
constructed here
can in theory then be obtained from this
characterisation. Group rings have 
special features and paraunitary matrices from these 
have nice structures. %representation theory. 

\section{Multidimensional}\label{multi}

\subsection{kD with idempotents}{\label{1D}}
%Initially generalise the method of section \ref{1D1} to obtain
%$k$-dimensional paraunitary elements 
%from orthogonal complete sets of idempotents as follows. 
\begin{proposition}\label{there} Let $\{E_1, E_2, \ldots, E_t\}$ be a
  complete symmetric  
orthogonal set of idempotents and define products of
non-negative powers of variables by  $w_i({\bf{z}})=
\pm \prod_{j=1}^kz_j^{t_{i,j}}$ for $i=1,2, \ldots,
t$ where ${\bf{z}}= (z_1,z_2, \ldots, z_k)$ and $t_{i,j}$ are
non-negative integers. 
Define $W({\bf{z}}) = \sum_{i=1}^tw_i({\bf{z}})E_i$.
Then  $W({\bf{z}})$ is a $k$-dimensional paraunitary element.
\end{proposition}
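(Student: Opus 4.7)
The plan is to mimic almost verbatim the argument given for Proposition \ref{first}, with the only new ingredient being that we are now working with a product of several commuting variables rather than a single $z$. Since each coefficient $w_i({\bf z})$ is just a sign times a monomial $\prod_j z_j^{t_{i,j}}$, I expect the argument to reduce immediately to the orthogonality/idempotency/completeness properties of the $E_i$'s, exactly as in the one-variable case.

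First I would unravel the definition of $^*$ on $R[{\bf z}]$. Since $w_i({\bf z}) = \pm \prod_j z_j^{t_{i,j}}$ is a scalar (commuting) monomial with a $\pm 1$ coefficient, and since each $E_i$ is symmetric by hypothesis, a short calculation gives
\[
W({\bf z})^* \;=\; W^*({\bf z}^{-1}) \;=\; \sum_{i=1}^{t} w_i({\bf z}^{-1})\, E_i ,
\]
where $w_i({\bf z}^{-1}) = \pm \prod_j z_j^{-t_{i,j}}$ (the sign is preserved because $(\pm 1)^* = \pm 1$). The key scalar identity is then $w_i({\bf z})\, w_i({\bf z}^{-1}) = 1$ for every $i$, which holds by direct cancellation of monomials and of signs.

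Next I would form the product $W({\bf z})\,W({\bf z})^*$ and expand:
\[
W({\bf z})\, W({\bf z})^* \;=\; \sum_{i=1}^{t}\sum_{j=1}^{t} w_i({\bf z})\, w_j({\bf z}^{-1})\, E_i E_j .
\]
Applying the orthogonality relation $E_i E_j = 0$ for $i \neq j$ kills every off-diagonal term, leaving
\[
W({\bf z})\, W({\bf z})^* \;=\; \sum_{i=1}^{t} w_i({\bf z})\, w_i({\bf z}^{-1})\, E_i^2 \;=\; \sum_{i=1}^{t} E_i \;=\; 1 ,
\]
using $w_i({\bf z})w_i({\bf z}^{-1})=1$, $E_i^2 = E_i$ and the completeness of the idempotent set.

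There is no real obstacle here: the proof is essentially the same as that of Proposition \ref{first}, and the only thing to check carefully is the behaviour of the monomial coefficients under the involution ${\bf z} \mapsto {\bf z}^{-1}$. Since the $z_j$ commute with each other and with $R$, moving the coefficients past the $E_i$'s in the product is harmless, which is why the one-variable proof transfers word-for-word to the multivariable setting.
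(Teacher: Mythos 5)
Your proof is correct and follows essentially the same route as the paper's own argument: the paper simply states that $W({\bf z})W({\bf z})^* = E_1^2+\cdots+E_t^2 = E_1+\cdots+E_t = 1$, which is exactly your expansion with the cross terms killed by orthogonality and the monomial coefficients cancelling under ${\bf z}\mapsto {\bf z}^{-1}$. You have merely written out explicitly the steps the paper leaves implicit.
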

\begin{proof}
Since  $\{E_1, E_2, \ldots, E_t\}$ is a complete symmetric orthogonal set of
 idempotents, $W({\bf{z}})W({\bf{z}})^* = E_1^2+E_2^2+ \ldots + E_t^2 =
 E_1+E_2 +\ldots + E_t = 1$.
\end{proof}
%The variables can be separated in this case. 
%when the same set of orthogonal idempotents is used.
 
As before for $a\in \cc$ define $a^*=\ov{a}$, the complex conjugate of
$a$, and for other fields
define $a^*=a$. Define $|a|^2 = aa^*$ in all cases.
%More generally over $\cc$ the following result is obtained:
\begin{proposition}\label{there1} Let $\{E_1, E_2, \ldots, E_t\}$ be a
  complete symmetric  
orthogonal set of idempotents and define products of
non-negative powers of the variables by  $w_i({\bf{z}})=
\al_i\prod_{j=1}^kz_j^{t_{i,j}}$ for $i=1,2, \ldots,
t$ where ${\bf{z}}= (z_1,z_2, \ldots, z_k)$ and $t_{i,j}$ are
non-negative integers and $|\al_i|^2=1$. 
Define
 $W({\bf{z}}) = \sum_{i=1}^tw_i({\bf{z}})E_i$ and  then $W({\bf{z}})$
is a paraunitary matrix.  
\end{proposition}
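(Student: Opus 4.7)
The plan is to mimic the proof of Proposition \ref{there}, but track the scalar coefficients $\alpha_i$ carefully and invoke the hypothesis $|\alpha_i|^2 = 1$ at the end. Since the argument is almost identical to what has already been done for the 1D case in Proposition \ref{gent} and for the sign-only multidimensional case in Proposition \ref{there}, the whole proof is really a book-keeping exercise.

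First I would write down $W(\mathbf{z})^*$ explicitly. Using the convention $A(\mathbf{z})^* = A^*(\mathbf{z}^{-1})$ together with the involution on scalars and matrices, and using that each $E_i$ is symmetric ($E_i^* = E_i$), I get
\[
W(\mathbf{z})^* \;=\; \sum_{i=1}^t \alpha_i^* \Bigl(\prod_{j=1}^k z_j^{-t_{i,j}}\Bigr) E_i.
\]
Next I would expand the product
\[
W(\mathbf{z}) W(\mathbf{z})^* \;=\; \sum_{i,\ell=1}^t \alpha_i \alpha_\ell^*\Bigl(\prod_{j=1}^k z_j^{t_{i,j} - t_{\ell,j}}\Bigr) E_i E_\ell.
\]

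Now comes the part where the hypotheses do all the work. Orthogonality $E_i E_\ell = 0$ for $i \neq \ell$ kills every off-diagonal term, leaving
\[
W(\mathbf{z}) W(\mathbf{z})^* \;=\; \sum_{i=1}^t |\alpha_i|^2 \Bigl(\prod_{j=1}^k z_j^{t_{i,j} - t_{i,j}}\Bigr) E_i^2 \;=\; \sum_{i=1}^t |\alpha_i|^2 E_i,
\]
using $E_i^2 = E_i$ and the fact that all monomial factors cancel on the diagonal. The hypothesis $|\alpha_i|^2 = 1$ then reduces this to $\sum_i E_i$, and completeness of the idempotent set gives $\sum_i E_i = 1$.

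There is no real obstacle here: the only thing one has to watch is that the multidimensional monomial $\prod_j z_j^{t_{i,j}}$ and its starred version $\prod_j z_j^{-t_{i,j}}$ multiply to $1$ on the diagonal, which is immediate because the variables commute. If one wanted to be slightly slicker, one could note that this proposition follows from Proposition \ref{there} by absorbing the $\alpha_i$ into a rescaling argument identical to the one in Proposition \ref{gent}, but writing the proof out directly is just as short.
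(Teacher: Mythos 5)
Your proof is correct and follows essentially the same route the paper takes: the paper states Proposition \ref{there1} without a separate proof, treating it as the evident combination of the expansion argument in Proposition \ref{there} with the modulus-one bookkeeping of Proposition \ref{gent}, which is exactly what you have written out. The computation of $W({\bf z})W({\bf z})^*$, the use of orthogonality, idempotency, symmetry of the $E_i$, and completeness are all applied correctly.
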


The $W({\bf{z}})$ so formed can be combined using products of matrices, or
tensor products of matrices when appropriate,
 to form further paraunitary matrices.

Such paraunitary matrices formed from Propositions \ref{there} and \ref{there1}
can however be shown to be separable but have uses of their own and will
be used later to form (constituents of) non-separable paraunitary matrices.

%We now design  multidimenional paraunitary matrices from complete 
%orthogonal sets of idempotents in group rings. This gives the setting
%and in which to extend the method to general sets of idempotents. 

%Here again the paraunitary may be combined with one another; ones formed
%from different group rings (with same group size) may be combined  
% and ones from $\{E_{11},E_{22},
%\ldots, E_{nn}\}$ may also be used as can those formed using orthonormal
%bases and projections.

%\subsection{2D}

\subsection{Examples of 2D paraunitary} 

%Use the variables $z,y$ instead of $z_1,z_2$ in the 2D case and 
%$z,y,w$ instead of $z_1,z_2,z_3$ in the 3D case. 
%Thus a matrix $U(z,y)$ satisfying $U(z,y){U}^*(z^{-1},y^{-1})=1$ is said to
%be a {\em 2-dimensional},  2D, paraunitary matrix. It is assumed the
%variables $z,y$ commute. 
Recall that if $u$ is a group ring element of $FG$ with $|G| =n$ 
then $U$ (capital letter equivalent) 
denotes the matrix of $u$ under the embedding of $FG$ into the
ring $F_{n\ti n}$ of $n\ti n$ matrices over $F$, see  \cite{hur3}.

Let $\{e_0,e_1,e_2\}$ be a (primitive) orthogonal complete set of
idempotents in $\cc C_3$, and thus $\{E_0,E_1,E_2\}$ is an orthogonal
complete symmetric set of idempotents in $\cc_{3\ti 3}$. Let
$\{f_0,f_1,f_2,f_3\}$ be an orthogonal complete set of idempotents in 
$\cc C_4$.
Define $u(z,y) = (e_0+e_1z+e_2z^2)f_0 + (e_0+e_1z+e_2z^2)f_1y +
(e_0+e_1z+e_2z^2)f_2y^2 + (e_0+e_1z+e_2z^2)f_3y^3$ and let $U(z,y)$ be
obtained from $u(z,y)$ by replacing each $e_i$ by $E_i$ and each $f_i$
by $F_i$.  
%Then $u(z,y)$ is an element in  $\cc_{12\ti 12}[z,y]$ 
%and may also be considered to be an element of $\cc (C_3\ti C_4)(C_\infty \ti
%C_\infty)$.

Then $U(z,y)$ is a paraunitary matrix. However here $u(z,y)
= (e_0+e_1z+e_2z^2)(f_0+f_1y+f_2y^2+f_3y^3)$ and so $U(z,y)$ is
separable. 
%It can also be seen from the matrix $U(z,y)$ directly that
%it is separable. 

%To obtain a non-separable paraunitary matrix 
%Ensure that $u(z,y)$
%is not a product of a word from $\cc C_3[z]$ times a word from $\cc
%example 
Let $u(z,y) = (e_0+e_1z+ e_2z^2)f_0 +
(e_0+e_1z^2+e_2z)f_1y + (e_0z+e_1+e_2z^2)f_2y^2 +
(e_0z+e_1z^3+e_2z^2)f_3y^3$. Then $U(z,y)$ is a paraunitary matrix.
% and then $U(z,y)$ is non-separable.

%\input{mo2}
%\input{newpara111}
\section{Matrices of idempotents}\label{gyt}
Paraunitary matrices may also be constructed from matrices with 
blocks of complete orthogonal sets of idempotents. %is presented here. 

Consider the following example.
Let $E_0=\frac{1}{2}\left(\begin{smallmatrix} 1& 1 \\ 1& 1
			  \end{smallmatrix}\right), E_1 =
\frac{1}{2}\left(\begin{smallmatrix} 1& -1 \\ -1& 1 \end{smallmatrix}\right)$.

Define  
 $W = \left(\begin{smallmatrix} xE_0 & yE_1 \\ zE_1 &
		 tE_0\end{smallmatrix}\right)=
		 \frac{1}{2}\left(\begin{matrix}x&x & y& -y \\ x&x &
				     -y & y \\ z&-z&t&t \\
				  -z&z&t&t\end{matrix}\right)$.

%Define $W = \left(\begin{smallmatrix} xE_0 & yE_1 \\ zE_1 &
	%	 tE_0\end{smallmatrix}\right)=
	%	 \frac{1}{2}\left(\begin{matrix}x&x & y& -y \\ x&x &
		%		     -y & y \\ z&-z&t&t \\
			%	  -z&z&t&t\end{matrix}\right)$.

Then $WW^*= I_4$ as $\{E_0, E_1\}$ is an orthogonal
symmetric complete set of idempotents.
 However $W$ is separable as  

$W=\begin{pmatrix}xE_0+ E_1 & 0\\ 0 &
	tE_0+E_1\end{pmatrix}\begin{pmatrix}E_0 & yE_1 \\ zE_1 &
			     E_0\end{pmatrix}$. \, \, (**)

and each of the matrices on the right in (**) is separable into 1D paraunitary
	matrices.

Here if we let 
	$x=1=t$ then (**) is a trivial product as  the first matrix on
	the right  is
	then the identity. If further $y=1=t$ this produces the matrix 
$H= \frac{1}{2}\left(\begin{smallmatrix} 1 & 1 & 1 & -1 \\ 1 & 1 & -1& 1 \\ 1&-1
	     & 1 & 1 \\ -1 & 1 & 1 & 1 \end{smallmatrix}\right)$ which is a
	common matrix used in quantum theory as non-separable. (This
	matrix $H$ with fraction omitted is a Hadamard regular matrix.) Thus
	non-separability of a paraunitary matrix is a stronger condition by
	comparison. 

Let $Q_0=\frac{1}{2}\left(\begin{smallmatrix} 1& i \\ -i& 1
			  \end{smallmatrix}\right), Q_1 =
\frac{1}{2}\left(\begin{smallmatrix} 1& -i \\ 1& 1 \end{smallmatrix}\right)$.
$W = \left(\begin{smallmatrix} xQ_0 & yQ_1 \\ zQ_1 &
		 tQ_0\end{smallmatrix}\right)$.

%		 \frac{1}{2}\left(\begin{matrix}x&x & y& -y \\ x&x &
	%			     -y & y \\ z&-z&t&t \\
		%		  -z&z&t&t
%\end{matrix}\right)$. 
Then $W$ is a paraunitary matrix. Now letting the variables have complex
 values of modulus $1$  gives rise to 
complex Hadamard regular matrices as for example 
$\left(\begin{smallmatrix} 1 & i & 1 & -i \\ -i & 1 & i& 1 \\ 1&-i
	     & 1 & i \\ i & 1 & -i & 1 \end{smallmatrix}\right)$.  

Let $\{E_0,E_1,E_2\}$ be an orthogonal symmetric complete 
set of idempotents in $\mathbb{F}_{3\ti 3}$.

Define $W= \begin{pmatrix}xE_0 & yE_1 & z E_2 \\ pE_2 & qE_0 & r
E_1 \\ sE_1&tE_2&vE_0\end{pmatrix}$.

The variables of $W$ are $x,y,z,p,q,r,s,t,v$ which need not necessarily
be distinct. 

Then $WW^* = I_9$.

For example in section \ref{joint} the following complete set of
			      symmetric idempotents was  
			      constructed  
in $\mathbb{Q}_{3\ti 3}$:

$P_1 = v_1\T v_1 = \frac{1}{9}\begin{ssmatrix} 4 & 2 & 4 \\ 2&1&2 \\ 4 &2
									&4 \end{ssmatrix},  P_2= v_2\T v_2 =
			      \frac{1}{9}
\begin{ssmatrix} 1 & 2 & -2 \\ 2&4&-4 \\
			       -2 &-4 & 4\end{ssmatrix}, 
P_3 = v_3\T v_3 = \frac{1}{9}\begin{ssmatrix}4 & -4&-2 \\ -4 &4 & 2
			      \\ -2 & 2& 1\end{ssmatrix}$.

Then  $W= \begin{pmatrix}xP_1 & yP_2 & z P_3 \\ pP_3 & qP_1 & r
P_2 \\ sP_2&tP_3&vP_1\end{pmatrix}$ is a paraunitary matrix. 

%In general let $\{E_0,E_1, \ldots, E_k\}$ be a complete symmetric
%orthogonal set of idempotents in $F_{n\ti n}$. Arrange these into a 
%$k\ti k$ block matrix of $k$ blocks with each 
% row of blocks containing 
%each $E_i$ once only. Now attach monomials to each $E_i$; the same
%monomial need not be used with $E_i$ in each row of blocks.
%The resulting matrix is a paraunitary matrix in the union of the
%variables in the monomials.   

%The condition that each row and column block contains each $E_i$
%once and once only can be obtained by using the group ring matrix of any
%group of order $k$; see for example \cite{hur3}. 
%So for example the $E_i$ could be arranged as a
%circulant block of matrices. Different arrangements will in general give
%inequivalent paraunitary matrices. 

%However it can be shown that such paraunitary matrices which initially
%may seem to be non-separable turn out to be
%separable. These constructions in addition prove useful later as parts of
%constructions of non-separable paraunitary matrices.

\subsection{General construction}\label{grty}
Let $\{E_0,E_1, \ldots, E_k\}$ be a complete symmetric
orthogonal set of idempotents in $F_{n\ti n}$. Arrange these into a 
$k\ti k$ block matrix of with each 
 row of blocks containing one of the blocks 
$\{E_0, E_1, \ldots , E_k\}$ exactly once. Now attach monomials to each
$E_i$; the same
monomial need not be used with each $E_i$ that appears. Let $W$ be the
resulting matrix.
\begin{theorem} $W$ is a paraunitary matrix.
\end{theorem}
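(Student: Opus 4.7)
The plan is to compute $WW^*$ block by block and see the paraunitary identity emerge from the two defining features of the construction: the orthogonal/idempotent/completeness relations of the set $\{E_0,\ldots,E_k\}$ and the combinatorial arrangement of the $E_i$ into the blocks. First I would set up notation: let the block at position $(i,j)$ of $W$ be $m_{ij}\,E_{\sigma(i,j)}$, where $m_{ij}$ is the attached monomial and $\sigma$ records which idempotent sits in block $(i,j)$. Because the $E_i$ are symmetric with respect to $^*$, the $(i,j)$ block of $W^*$ is $m_{ji}^{\,*}\,E_{\sigma(j,i)}$, where the involution acts on monomials in the usual way ($z_\ell\mapsto z_\ell^{-1}$, and scalar coefficients by complex conjugation or the ambient involution), so that $m\,m^* = 1$ for any monomial with unit-modulus coefficient.

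Next I would compute
\[
(WW^*)_{ij} \;=\; \sum_{l} m_{il}\,m_{jl}^{\,*}\,E_{\sigma(i,l)}\,E_{\sigma(j,l)}.
\]
For the diagonal case $i=j$, each term collapses: $m_{il}m_{il}^{\,*}=1$ and $E_{\sigma(i,l)}^2=E_{\sigma(i,l)}$, so the sum becomes $\sum_l E_{\sigma(i,l)}$. The row condition on the arrangement says that as $l$ runs over the columns, $\sigma(i,l)$ runs over $\{0,1,\ldots,k\}$ exactly once; completeness of the idempotent family then gives $\sum_l E_{\sigma(i,l)}= \sum_{r=0}^{k} E_r = I_n$. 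So the diagonal blocks are all $I_n$.

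For the off-diagonal case $i\neq j$ I would invoke orthogonality: $E_aE_b=0$ for $a\neq b$. Each summand vanishes whenever $\sigma(i,l)\neq\sigma(j,l)$, so the whole sum is zero as soon as, for every column $l$, the two entries $\sigma(i,l)$ and $\sigma(j,l)$ are distinct. Thus I would need the arrangement of the blocks to be a Latin square (each column also containing each $E_r$ exactly once), and I expect this is the real hypothesis intended in the theorem, consistent with the $3\times3$ example where both rows and columns are permutations of $\{E_0,E_1,E_2\}$. The main delicate point is therefore honest identification of the combinatorial hypothesis: with just the row condition the diagonal blocks work, but the off-diagonal cancellation requires the column condition as well. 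Assuming the Latin-square arrangement, the above two calculations give $WW^* = I_{n(k+1)}$, and since the identity does not depend on $\mathbf z$, the same identity with $\mathbf z$ replaced by $\mathbf z^{-1}$ yields $W(\mathbf z)W^*(\mathbf z^{-1})=1$, establishing that $W$ is paraunitary.
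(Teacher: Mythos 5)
Your proof is correct and follows essentially the same route as the paper's: compute $WW^*$ via block inner products of rows of blocks, getting $E_0+E_1+\cdots+E_k=I_n$ on the diagonal by completeness and $0$ off the diagonal by orthogonality. The one point where you go beyond the paper's own argument is in isolating the combinatorial hypothesis honestly: the off-diagonal vanishing needs $\sigma(i,l)\neq\sigma(j,l)$ for every column $l$, i.e.\ a Latin-square arrangement in which each \emph{column} of blocks, not just each row, contains every $E_i$ exactly once. The paper's formal statement only imposes the row condition and its proof asserts the off-diagonal cancellation without justification; the column condition is only acknowledged afterwards in the remark that the arrangement can be taken from the group ring matrix of a group of order $k$ (which does produce a Latin square). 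So your reading of the ``real hypothesis intended'' is the right one, and with it the argument is complete.
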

\begin{proof} Take the block inner product of two different rows of
 blocks. The $E_i$ are orthogonal to one another so the result is
 $0$. Take the block inner product of any row of blocks with
 itself. This gives 
$E_1^2+E_2^2+\ldots + E_k^2 = E_1+E_2+\ldots + E_k = 1 (=I_n)$. Hence
 $WW^* = 1 ( =I_{nk})$.
\end{proof}
  
The $W$ is a paraunitary matrix in the union of the
variables of the monomials.   

The condition that each row and column block contains each $E_i$
once and once only can be obtained by using the group ring matrix of any
group of order $k$; see for example \cite{hur3}. 
So for example the $E_i$ could be arranged as a
circulant block of matrices. Different arrangements will in general give
inequivalent paraunitary matrices.

(Modifications in the construction of $W$   
by attaching elements of modulus $1$ as coefficients will give
paraunitary matrices but further
conditions are necessary on these elements.These modifications are not
 included here.)
%Modify $W$ as follows. Replace $E_0$ in j-th block row by
%$E_0.\diag(e^{i\theta_{0j1}}, e^{i\theta_{0j2}}, \ldots
%e^{i\theta_{0jn}})$ and in general replace $E_l$ in j-th block row by
%$E_l.\diag(e^{i\theta_{lj1}}, e^{i\theta_{lj2}}, \ldots
%e^{i\theta_{ljn}})$. Then $W$ is a paraunitary matrix under certain
%conditions on the differences of the $\theta_{jkl}$ and details are omitted.

These are nice constructions for paraunitary matrices but can be shown 
to be separable.  However  they have uses in themselves and will 
prove useful later as parts of 
constructions of non-separable paraunitary matrices. 
They may also be used to construct special and regular 
types of Hadamard real and complex matrices. These are illustrated in the
following sections \ref{two} and \ref{three4}.

\subsection{Monomials and Hadamard regular matrices}\label{two}
Although the matrices of idempotents as constructed in sections
\ref{gyt} and \ref{grty} 
produce separable paraunitary matrices these can
be useful structures in themselves; they may also be used in certain cases 
to produce `regular' Hadamard matrices. (Walsh-Hadamard matrices are
 regular Hadamard matrices which have been used extensively
 in communications' theory.) 
If in a paraunitary matrix the entries are $\pm 1$ times monomials 
in the variables then substituting $\pm 1$ for the variable gives a
Hadamard matrix. If in a paraunitary matrix the entries are $\om$ times
monomials where $\om$ is a complex number of modulus $1$ then
substituting each variable by a complex number of modulus $1$ gives a
complex Hadamard matrix. % The fractions need to be treated.

For example use $P_0= \cir(1,1,1), P_1=\cir(1,\om,\om^2), P_2=
\cir(1,\om^2, \om)$ where $\om$ is a primitive third root of unity gives
the following matrix: $\begin{pmatrix}xP_0& yP_1 & zP_2 \\ zP_2 & xP_0 & yP_1 \\ yP_1
	   & zP_2 & xP_0 \end{pmatrix}$.

Substituting a complex number of modulus $1$ for each of $x,y,z$ gives a
complex Hadamard matrix.

Butson-type Hadamard matrices $H(q,n)$ are complex Hadamard $n\ti n$ 
matrices with entries which are $q^{th}$ roots of unity.

With the above example substituting a third
root of unity for the variables gives a Hadamard $H(3,9)$ matrix, that
is a a matrix $H$ with entries which are third roots of unity so that
$HH^*= 9$. These matrices could then be used to 
produce  Hadamard $H(3,36)$ matrices.

This can  be extended to $q\ti q$ matrices using the complete
orthogonal set of idempotents for the
cyclic group of order $q$; this will involve $q^{th}$ roots of
unity and is related to the representation theory of the finite cyclic
group.  From this Hadamard $H(q,q^2)$
matrices can be produced and from these Hadamard $H(q,(2q)^2)$ matrices
can be produced and so on.
% This uses the representation theory of the cyclic group. 

%\input{mo8}
%\input{sep41}

\subsection{Mixing}\label{three4}
It is noted that interchanging rows and/or columns in a paraunitary
matrix results in a paraunitary matrix and thus interchanging blocks of
rows and/or columns results in a paraunitary matrix. When using 
complete orthogonal sets of idempotents the blocks of
row and/or columns of idempotents can be interchanged in the
construction stage. The resulting
paraunitary matrices can take a regular form. In certain cases the
variables can be specialised to form regular Hadamard matrices. Here 
examples are given but details of the constructions are omitted.
   
Let $\{E_0,E_1\}$ be a complete symmetric orthogonal set of idempotents in
  $\mathbb{C}_{n\ti n}$. For
  example in $\mathbb{C}_{2\ti 2}$ these could be $E_0=\frac{1}{2}
  \begin{ssmatrix} 1&1\\1&1\end{ssmatrix},
  E_1=\frac{1}{2}\begin{ssmatrix} 1&-1 \\ -1&1\end{ssmatrix}$.

Define 

$W= \frac{1}{4}\begin{ssmatrix}xE_0 & yE_1 & zE_0& tE_1 &xE_0
&  yE_1 & -zE_0& -tE_1 \\ -uE_1& vE_0 & wE_1& pE_0 & uE_1&
  vE_0&wE_1&-pE_0\\ zE_0&-tE_1& xE_0&yE_1 & -zE_0&tE_1&xE_0&yE_1 \\
  -wE_1&pE_0&-uE_1&vE_0 & -wE_1&-pE_0&uE_1&vE_0 \\
  xE_0&yE_1&-zE_0&-tE_1& xE_0&yE_1&zE_0&tE_1 \\
  uE_1&vE_0&wE_1&-pE_0&-uE_1&vE_0&wE_1 &pE_0\\
  -zE_0&tE_e&xE_0&yE_1&zE_0&-tE_1&xE_0 & yE_1\\
  -wE_1&-pE_0&uE_1&vE_0&-wE_1&pE_0&-uE_1&vE_0 \end{ssmatrix}$. 

Then $W$ is a paraunitary matrix in the variables $\{x,y,z,t,u,v,w,p\}$. 

Interchanging blocks of rows and/or columns in $W$ will also give a
paraunitary matrix which is equivalent to $W$. 
By varying the signs
other constructions of paraunitary matrices are obtained and these 
 are not generally equivalent to one another.

They may  seem to be non-separable but  by interchanging 
blocks of rows and blocks of columns it can be shown that they are 
 separable. However they 
are interesting in themselves with interesting properties and can also
be used to construct Hadamard matrices of regular types. 

%When $E_0=\frac{1}{2}
 % \begin{pmatrix} 1&1\\1&1\end{pmatrix},
 % E_1=\frac{1}{2}\begin{pmatrix} 1&-1 \\ -1&1\end{pmatrix}$ and 
 Examples % specialising paraunitary matrices are obtained 
 may also be interpreted as coming  from
  the structure of group rings of various groups. Then giving the values
  $\pm 1$ to the variables can result in regular Hadamard matrices or
  giving the values $e^{i\theta}$ to the variables can result in Hadamard
  complex matrices. Hadamard matrices with entries which are roots of
  unity may also be obtained from these constructions.

The following is an example of this type.

$$w(x,y,z,t)=\frac{1}{4}\begin{ssmatrix} x & x 
& y& -y & z& z& t& -t & x& x & y& -y
	     & -z & -z & -t & t \\ x& x& -y & y & z& z & -t & t & x& x &
	     -y & y & -z & -z & t & -t \\ - t& t & x& x & y & -y & z & z
	     & t & -t & x & x & y & -y & -z & -z \\ t & -t & x & x & -y
	     & y & z & z & -t & t & x & x & -y & y & -z & -z \\ z&z & -t
	     & t & x & x & y & -y & -z & -z & t & -t & x & x & y & -y
	     \\ z & z & t & -t & x & x & -y & y & -z & -z & -t & t & x
	     & x & -y & y \\
	     -y & y & z & z & -t & t & x & x & -y & y & -z & -z & t & -t
	     & x & x \\ y& -y & z& z & t & -t & x & x & y & -y & -z & -
	     z & -t & t & x & x \\  x & x & y & -y & -z & -z & -t & t & x
	     & x & y & -y & z & z & t & -t \\ x& x & -y & y & -z & -z &
	     t & -t & x& x & -y & y & z & z & -t & t \\ t & -t & x & x &
	     y& -y & -z & -z & -t & t & x & x & y & -y & z & z \\ -t & t
	     & x & x & -y & y & -z & -z & t & -t & x& x & -y & y & z & z
	     \\ -z & -z & t & -t & x & x & y & -y & z & z & -t & t & x &
	     x & y & -y \\-z & -z & -t & t & x & x & -y & y & z & z & t
	     & -t & x & x & -y & y \\ -y & y & -z & -z & t & -t & x & x&
	     -y & y & z & z & -t & t & x & x \\ y & -y & -z & -z & -t &
	     t & x & x & y & -y & z & z & t & -t & x & x
	    \end{ssmatrix}$$

%(Vertical and horizontal lines have been introduced to make the reading
%of the matrix clearer.) 
This matrix has the form $\begin{pmatrix} P & Q
						  \\ Q & P 
						 \end{pmatrix}$
where $PQ^*=0 = QP^*$ and has the structure of the group ring of {$C_8\cross C_2$}.

It is clear that $x,y,z,t$ may each be replaced by a 
monomial times a complex number of modulus $1$ in $W$ and a paraunitary
matrix is obtained. 

Values of $\pm 1$ may be given to the variables in $W$ and with the
fraction omitted this gives a regular real Hadamard matrix. Values of modulus
$1$ may be given to to the variables in $W$ and with the fraction
omitted a regular complex Hadamard matrix is obtained. Other group ring
structures can also arise in this manner. 
%This can be considered also as paraunitary matrix produced from the
%structure of group ring $\cc G$ where $G=C_8\cross C_2$.
%This can be extended to $G=C_8\cross C_2\cross (C_2\ti C_2)^t$ for
%example.

% Variations are possible of the above. 

Walsh-Hadamard matrices have the  structure of the group
ring of $C_2^n$. These examples may also be extended in a similar
way. So for example 
constructions with the structure of the group ring 
 $C_8\cross C_2\cross G$ may be made if paraunitary matrices with the
 structure of $G$ can be formed such as when $G= C_4^n$ or when $G=(C_2\cross
 C_2)^n$ and others.

\section{Non-separable constructions}\label{nonsep}

%It has been seen that forming a matrix with variables and complete
%orthogonal sets of idempotents is a nice way to construct paraunitary
%matrices. 
Several methods have now been developed for
constructing multidimensional paraunitary methods from complete
orthogonal sets of matrices. The matrices produced  are useful in many ways but
 are found to be separable although not
appearing so initially. The methods use just one complete orthogonal
set of idempotents in the various constructions. 

Thus we are led to consider {\em different} complete orthogonal sets
of variables and to `tangle' them up in order to construct non-separable
paraunitary matrices.

%\subsection{Tangle different complete orthogonal sets of
 % idempotents}\label{todo}

%\input{sep3}

\subsection{A general construction}
\begin{proposition}\label{fty1} Let $A,B$ be paraunitary matrices of the same size
 but not necessarily
 with the same variables over a field in which $2$ has a square
 root. Then $W=\frac{1}{\sqrt{2}}\begin{pmatrix}  A&B \\
				  A&-B\end{pmatrix}$ and 
$Q=\frac{1}{\sqrt{2}}\begin{pmatrix} A&A \\
				  B&-B\end{pmatrix}$  
is a  paraunitary matrix in the union of the variables in $A,B$.
\end{proposition}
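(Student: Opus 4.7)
The plan is to verify directly that $WW^*=1$ and $QQ^*=1$ by block multiplication, using the hypothesis that $AA^*=1=BB^*$. Recall from the earlier definitions that for a polynomial matrix $M(\mathbf{z})$ over a ring with involution, $M^*$ means $M^*(\mathbf{z}^{-1})$, and that for block matrices the involution commutes with taking the conjugate transpose of the block pattern. Thus
\[
W^* \;=\; \frac{1}{\sqrt{2}}\begin{pmatrix} A^* & A^* \\ B^* & -B^* \end{pmatrix},
\qquad
Q^* \;=\; \frac{1}{\sqrt{2}}\begin{pmatrix} A^* & B^* \\ A^* & -B^* \end{pmatrix}.
\]
Note that $\frac{1}{\sqrt 2}$ is a legitimate scalar by the hypothesis that $2$ has a square root in the field, and $(1/\sqrt 2)^*= 1/\sqrt 2$ since the involution fixes rationals (or is the identity on the base scalars in the relevant characteristic).

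Next I would carry out the block product for $W$:
\[
WW^* \;=\; \frac{1}{2}\begin{pmatrix} A & B \\ A & -B \end{pmatrix}\begin{pmatrix} A^* & A^* \\ B^* & -B^* \end{pmatrix} \;=\; \frac{1}{2}\begin{pmatrix} AA^* + BB^* & AA^* - BB^* \\ AA^* - BB^* & AA^* + BB^* \end{pmatrix}.
\]
Since $A,B$ are paraunitary, $AA^* = BB^* = 1$, so the off-diagonal blocks vanish and the diagonal blocks each equal $2\cdot 1$, giving $WW^* = 1$. The computation for $Q$ is analogous:
\[
QQ^* \;=\; \frac{1}{2}\begin{pmatrix} A & A \\ B & -B \end{pmatrix}\begin{pmatrix} A^* & B^* \\ A^* & -B^* \end{pmatrix} \;=\; \frac{1}{2}\begin{pmatrix} 2AA^* & 0 \\ 0 & 2BB^* \end{pmatrix} \;=\; 1.
\]

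There is essentially no obstacle: the only mild subtlety is to be sure the block involution is applied correctly (transposing the block pattern and starring each block), and to note that the factor $1/\sqrt 2$ is available by the stated hypothesis on the field. In characteristics where $\sqrt 2$ is unavailable one could still obtain a scalar multiple of a paraunitary matrix, but the normalisation demands the square root. Nothing needs to be said about the variables: since $A$ and $B$ use (possibly) different variable sets and the involution acts coordinate-wise as $z_i\mapsto z_i^{-1}$, the identities $AA^*=1,\,BB^*=1$ continue to hold inside the larger polynomial ring over the union of the variables.
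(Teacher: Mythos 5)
Your proposal is correct and follows essentially the same route as the paper: a direct block multiplication of $WW^*$ using $AA^*=BB^*=1$, with the case of $Q$ handled by the analogous computation (the paper alternatively notes $Q$ is the transpose of $W$ and appeals to Lemma \ref{trans}). Your added remarks on the availability of $\frac{1}{\sqrt{2}}$ and on the union of variable sets are sound but not needed beyond what the paper records.
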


\begin{proof} Suppose $A,B$ are $n\ti n$ matrices. Then \begin{eqnarray*}
WW^*&=&\frac{1}{2}\begin{pmatrix} A&B \\
				  A&-B\end{pmatrix}\begin{pmatrix} A^*&A^* \\
				  B^*&-B^*\end{pmatrix} \\ &=&
				  \frac{1}{2}\begin{pmatrix}AA^*+BB^*&AA^*-BB^* \\
				   AA^*-BB^* & AA^*+BB^*\end{pmatrix} \\
		   &=& \frac{1}{2}\begin{pmatrix} I_n + I_n & I_n-I_n \\
				   I_n-I_n & 
			I_n+I_n
\end{pmatrix} = I_{2n} \end{eqnarray*}

The case for $Q$ can be considered similarly or it follows from Lemma
 \ref{trans} since $Q$ is the transpose of $W$.
\end{proof}

Paraunitary matrices constructed by methods of previous sections 
may be used as input to Proposition 
\ref{fty1} to  construct paraunitary matrices.
Matrices constructed using the Proposition can then also be used
as input. % to construct further paraunitary matrices.

The methods are fairly general and it is easy to produce 
examples for input using various complete orthogonal sets of idempotents.
The result holds in general over any field which contains the square
root of $2$.

If $A=B$ then $W$ in Proposition \ref{fty1} is the tensor product
$A\otimes J$ where $J=\begin{ssmatrix} 1 & 1 \\ 1& -1 \end{ssmatrix}$.
If $A$ and $B$ are formed using the same complete symmetric orthogonal set of
idempotent as in \ref{multi} or \ref{gyt} then $W$ can be shown to be separable.

It would appear initially that Proposition \ref{fty1} 
could/should be generalised to 
$W=\frac{1}{\sqrt{2}}\begin{pmatrix} pA& qB \\ pA &-qB
				    \end{pmatrix}$ where $p,q$ are
      monomials   or in the case of $\cc$ 
where $p,q$ are monomials times
      a complex number of modulus $1$. However then $W$ is separable as
      a product  
$W=\frac{1}{\sqrt{2}}\begin{pmatrix} A&B \\
				  A&-B\end{pmatrix}\begin{pmatrix}pI & 0 \\ 0 &
						   qI\end{pmatrix}$.

If $W=\begin{pmatrix} X & Y \\ Z & T\end{pmatrix}$ where
$X,Y,Z,T$ are matrices of the same size then $X,Y,Z,T$ are referred to
as the {\em blocks} of $W$ and $\begin{pmatrix} X & Y \end{pmatrix}$ and
$\begin{pmatrix}Z & T\end{pmatrix}$ are the {\em row blocks} of
$W$. Similarly {\em column blocks} of $W$ are defined.
						   
%and so is separable. 

Suppose $A,B$ are matrices of the same size. Then a {\em tangle} of
$\{A,B\}$ is  one of
\begin{enumerate}
\item $W=\frac{1}{\sqrt{2}}\begin{pmatrix} A& B \\ A &-B
				    \end{pmatrix}$.% where $p,q$ are
      %$\pm 1$ times monomials or in the case of $\cc$ 
%where $p,q$ are monomials times      a complex number of modulus $1$ 
%\item 
%$W_1=\frac{1}{\sqrt{2}}\begin{pmatrix}A& B \\ -A & B\end{pmatrix}$  
\item A matrix obtained from 1.\  by interchanging rows of
      blocks and/or columns of blocks.
\item The transpose of any matrix obtained in  1.\ or 2.
%\item The negative of any matrix formed in 1.\ or 2.\ or 3.\ .
\end{enumerate}

A tangle of $\{A,B\}$ is not the same as,  and is  
 not necessarily equivalent to, a tangle of $\{B,A\}$.
Note that interchanging any rows and/or columns of a paraunitary matrix
results in an (equivalent) paraunitary matrix. Thus in particular
interchanging rows and/or columns of blocks also results in equivalent
paraunitary matrices; thus item 2.\ gives  equivalent paraunitary
matrices to item 1. The negative of a paraunitary matrix is a paraunitary
matrix as is the $^*$ of a paraunitary matrix.

For example  \\ $\frac{1}{\sqrt{2}}\begin{pmatrix} A& B \\ A &-B
				    \end{pmatrix},
\frac{1}{\sqrt{2}}\begin{pmatrix} A& -B \\ A &B
				    \end{pmatrix},
\frac{1}{\sqrt{2}}\begin{pmatrix} A& A \\ B & -B
				    \end{pmatrix}$
are tangles of $\{A,B\}$
\\ and \\
$\frac{1}{\sqrt{2}}\begin{pmatrix} B& A \\ B &-A
				    \end{pmatrix},
\frac{1}{\sqrt{2}}\begin{pmatrix} B& -A \\ B &A
				    \end{pmatrix},
\frac{1}{\sqrt{2}}\begin{pmatrix} B& B \\ A & -A
				    \end{pmatrix}$
are tangles of $\{B,A\}$.

%\begin{enumerate}
%\item $W=\frac{1}{\sqrt{2}}\begin{pmatrix} A& B \\ A &-B
%				    \end{pmatrix}$.% where $p,q$ are
      %$\pm 1$ times monomials or in the case of $\cc$ 
%where $p,q$ are monomials times      a complex number of modulus $1$ 
%\item 
%$W_1=\frac{1}{\sqrt{2}}\begin{pmatrix}A& B \\ -A & B\end{pmatrix}$  
%\item A matrix obtained from 1.\ or 2.\ by interchanging rows of
 %     blocks and/or columns of blocks.
%\item The negative of any matrix formed in 1.\ or 2.\ or 3.\ .
%\end{enumerate}

 Proposition \ref{fty1} may be generalised as follows.
\begin{proposition}\label{fty} Let $A,B$ be paraunitary matrices of the
 same size 
 but not necessarily
 with the same variables. Then a tangle of $\{A,B\}$ or $\{B,A\}$ is a paraunitary
 matrix.
\end{proposition}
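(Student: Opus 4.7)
The plan is to reduce the general tangle statement to the two base cases already handled in Proposition \ref{fty1}, namely $W=\frac{1}{\sqrt{2}}\begin{pmatrix} A & B \\ A & -B\end{pmatrix}$ and its transpose $Q=\frac{1}{\sqrt{2}}\begin{pmatrix} A & A \\ B & -B\end{pmatrix}$. Once these are in hand, every admissible tangle of $\{A,B\}$ or $\{B,A\}$ should be obtainable from one of these by operations that preserve paraunitarity.

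First, I would establish a small auxiliary fact: if $P({\bf z})$ is paraunitary and $\Pi,\Sigma$ are (constant) signed permutation matrices, then $\Pi P({\bf z}) \Sigma$ is paraunitary. This is immediate, since signed permutation matrices satisfy $\Pi\Pi^*=I$ and $\Sigma\Sigma^*=I$, so $(\Pi P\Sigma)(\Pi P\Sigma)^* = \Pi P\Sigma\Sigma^*P^*\Pi^* = \Pi P P^* \Pi^* = I$. In particular, interchanging any two rows of blocks, interchanging any two columns of blocks, or negating an entire row/column block, all preserve paraunitarity.

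Next, I would verify that every tangle of $\{A,B\}$ arises from $W$ (or $Q$ via Lemma \ref{trans}) by such operations. An item of type 1 is $W$ itself. An item of type 2 is obtained from $W$ by interchanging rows and/or columns of blocks, hence is paraunitary by the auxiliary fact. An item of type 3 is the transpose of something in types 1 or 2; since the transpose of a paraunitary matrix is paraunitary (Lemma \ref{trans}), these are also paraunitary. For the $\{B,A\}$ case, simply apply the same argument to the pair $(B,A)$, using Proposition \ref{fty1} with the roles of $A$ and $B$ exchanged — the hypotheses are symmetric in the two matrices.

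There is essentially no obstacle here: the whole content of the proposition was already captured in Proposition \ref{fty1}, and the extension to arbitrary tangles is a routine observation about signed permutation symmetries combined with the transpose invariance of paraunitarity. The only point requiring a brief comment is that sign changes of blocks (which appear implicitly in some listed examples) are covered by signed permutation matrices, so no separate case analysis is needed.
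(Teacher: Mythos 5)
Your proof is correct and follows essentially the same route as the paper: the paper leaves Proposition \ref{fty} without a formal proof, relying on Proposition \ref{fty1} for item 1, the remark that interchanging rows/columns of blocks preserves paraunitarity for item 2, and Lemma \ref{trans} for item 3, with the $\{B,A\}$ case by symmetry. Your signed-permutation-matrix formulation is just a slightly more explicit packaging of the same observations.
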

%The only part that requires proof is 2.\ and this may be proved similar
%to Proposition \ref{fty1}. Note that interchanging rows and/or columns
%of a paraunitary matrix results in a paraunitary matrix as does also
%taking the negative of a paraunitary matrix. 
%The proof of this theorem may be deduced from  Proposition \ref{fty1}.

Use the expression `$A$ is tangled with $B$' to mean that a tangle of
$\{A,B\}$ or $\{B,A\}$ is formed.

%The Proposition Again notice that a tangle of  is also a paraunitary matrix 
\subsection{Examples}
\begin{enumerate} \item \begin{enumerate} \item 
Construct $A = (x)$ and $B=(y)$. \item Construct $W=\frac{1}{\sqrt{2}}\begin{ssmatrix} A& B \\
A & -B\end{ssmatrix} = \frac{1}{\sqrt{2}}\begin{ssmatrix} x & y \\ x & -y\end{ssmatrix}$.
			 Then $W$ is a paraunitary matrix. \item
								Similarly 
			 construct $Q=\begin{ssmatrix} z & t \\ z &
				    -t\end{ssmatrix}$.
\item Tangle $W$ and $Q$ to produce for example the paraunitary matrix  
$T=\frac{1}{2}\begin{ssmatrix} x& y& z& t \\ x&-y& z& -t \\ x&y&-z& -t \\ x&-y&-z& y
 \end{ssmatrix}$.
\item The process can be continued: Matrices produced from (d), with
      different variables, can be input to form further paraunitary
      matrices.

\end{enumerate} 
\item \begin{enumerate}
\item Construct as is \ref{finite} the following complete symmetric sets
      of idempotents in $3\ti 3 $ matrices over $\mathbb{F}_7$:

 $\{P_0=\begin{ssmatrix} 2&1&2\\ 1&4&1 \\ 2&
1&2 
\end{ssmatrix},
P_1=\begin{ssmatrix} 4&1&6 \\ 1&2& 5 \\ 6&5& 2 \end{ssmatrix},
P_2\begin{ssmatrix} 2 & 5&6 \\ 5&2& 1 \\ 6 & 1 &4\end{ssmatrix}\}$,
$\{ Q_0=\begin{ssmatrix} 6&5&6 \\ 5&3&5 \\ 6& 5&6 \end{ssmatrix},
Q_1=\begin{ssmatrix} 5&2&5 \\ 2&5& 2 \\ 5&2& 5 \end{ssmatrix},
Q_2= \begin{ssmatrix} 4 & 0&3 \\ 0&0 & 0 \\ 3 & 0 &4\end{ssmatrix}\}$.
\item Form $A=xP_0+yP_1+zP_2, B=tQ_0+rQ_1+sQ_2$.
\item Tangle $A,B$ to form for example the following paraunitary matrix
      over $\mathbb{F}_7$: $\begin{ssmatrix} A & A \\ -B &
			    B\end{ssmatrix}$.
\end{enumerate}
\item 
\begin{enumerate}
\item Construct, in $\mathbb{C}_{2\ti 2}$, the following complete symmetric
      (different) sets of orthogonal idempotents 
$\{E_0,E_1\}$ and $\{Q_0,Q_1\}$ where:
 
$E_0=\frac{1}{2}\begin{ssmatrix} 1 & 1 \\ 1&1
  \end{ssmatrix}, \, E_1=\frac{1}{2}\begin{ssmatrix} 1&-1 \\ -1&1
  \end{ssmatrix}.$
\hspace{.1in} $Q_0=\frac{1}{5}\begin{ssmatrix}4 & 2 \\ 2 & 1 \end{ssmatrix}, \,
Q_1 = \frac{1}{5}\begin{ssmatrix}1&-2 \\ -2 & 4 \end{ssmatrix}.
$ 
\item Construct $A=xE_0+yE_1, B=zQ_0+tQ_1$.
\item Construct $W=\frac{1}{\sqrt{2}}\begin{pmatrix} A&B \\
				  A&-B\end{pmatrix}$. Then 
$W$ is a paraunitary matrix of size $4\ti 4$ with variables
      $\{x,y,z,t\}$.

\end{enumerate} 
\item \begin{enumerate}
%\begin{theorem}\label{ky}
\item Construct different $\{E_0,E_1\}$ and $\{Q_0,Q_1\}$ 
complete symmetric  orthogonal sets of idempotents in $\cc_{n\ti n}$.  
\item Construct $W= \frac{1}{\sqrt{2}}\begin{pmatrix} xE_0 & yE_1 & uQ_0&vQ_1
  \\ rE_1&pE_0 & zQ_1&tQ_0 \\ xE_0&yE_1 & -uQ_0 & -v Q_1 \\
  rE_1&pE_0&-zQ_1&-tQ_0 \end{pmatrix}$. Then $W$ is a paraunitary
  matrix.

It is essential that $\{E_0,E_1\}$ and $\{Q_0,Q_1\}$ are {\em different}
complete orthogonal sets of idempotents in order for $W$ to be
non-separable although the construction does not depend on this.
% Theorem %\ref{ky} does not of course depend on taking
%different sets. 
%hold over other fields which
%contains $\sqrt{2}$. % although of course the theorem is true when these are the

\item Clearly also the roles of $\{E_0,E_1\}$ and $\{Q_0,Q_1\}$ can be
 interchanged in $W$ and a paraunitary matrix is still obtained. 
Changing the $\pm$ signs 
in such a way that the block inner product of any two rows of blocks is
$0$ will give a different inequivalent paraunitary matrix. Thus for example 
$W$ could be replaced by the following: 
\item Construct $W= \frac{1}{\sqrt{2}}\begin{pmatrix} xE_0 & yE_1 & uQ_0&vQ_1
  \\ rE_1&pE_0 & -zQ_1&-tQ_0 \\ -xE_0&-yE_1 & uQ_0 & v Q_1 \\
  rE_1&pE_0&zQ_1&tQ_0 \end{pmatrix}$. Then this $W$ is also a
      paraunitary matrix.

%\item Construct $W= \frac{1}{\sqrt{2}}\begin{pmatrix} \al_1xE_0 & \al_2yE_1 &
	%		       \al_3uQ_0&\al_4vQ_1
  %\\ \al_5rE_1&\al_6pE_0 & \al_7zQ_1&\al_8tQ_0 \\ \al_1xE_0&\al_2yE_1 &
	%%		       -\al_3uQ_0 & -\al_4v Q_1 \\
  %\al_5rE_1&\al_6pE_0&-\al_7zQ_1&-\al_8tQ_0 \end{pmatrix}$ where the
 %$\al_i$ are complex numbers of modulus $1$. Then $W$ is a paraunitary
  %matrix (of size $4n\ti 4n$).  
\end{enumerate}

\item \begin{enumerate} \item 
 See section \ref{idems} for methods for constructing  
complete orthogonal symmetric sets of idempotents. 
Examples  of such sets  in $\cc_{2\ti 2}$ are $\{E_0,E_1\}, \{Q_0,Q_1\}$
where these are given as in Example 2 above.  

%and tangling these %(or a variation as explained) 
%constructs a paraunitary matrix.

\item Another complete symmetric orthogonal set of idempotents in $\cc_{2\ti 2}$
is the following: \\
$\{P_0 =
\frac{1}{2}\begin{ssmatrix} 1 &-i \\ i & 1 \end{ssmatrix}, P_1 =
\frac{1}{2}\begin{ssmatrix}1 &i \\ -i & 1 \end{ssmatrix}\}$.
\item In example 2. $\{E_0,E_1\}$ is `tangled' with $\{Q_0,Q_1\}$.
$\{P_0,P_1\}$ may similarly be combined (`tangled') with either $\{E_0,E_1\}$ or
$\{Q_0,Q_1 \}$ to construct paraunitary matrices. %as in Theorem \ref{ky}
%or a variation of this as explained. 

\item Using  $\{P_0,P_1\}$
with $\{E_0,E_1\}$ produces paraunitary matrices of the form
$\frac{1}{2} P$ where the entries of $P$ are $\pm 1, \pm i$ with
$i=\sqrt{-1}$. By specialising the variables, complex Hadamard
matrices may be  obtained. 

\item For example the following is a
paraunitary matrix:
$\frac{1}{2\sqrt{2}}\begin{ssmatrix} x & x & y & -y& u & -iu & v & iv \\ x&x & -y & y
&iu&u&-iv& v \\ r&-r & p& p & z & iz& t & -it \\ -r & r& p&p& -iz& z& it& t
\\ x & x & y&-y & -u & iu & -v & -iv \\ x&x&
-y&y&-iu&-u&iv&-v\\ r&-r&p&p&-z&-iz&-v&iv \\ -r&r&p&p&iz&-z&-iz& z\end{ssmatrix}$. 

\item By giving values of modulus $1$ to the variables, complex Hadamard
 matrices are obtained. For example letting all the variables have the
 value $+1$ gives the following complex Hadamard matrix: \\
$\begin{ssmatrix} 1 & 1 & 1 & -1& 1 & -i & 1 & i \\ 1&1 & -1 & 1
&i&1&-i& 1 \\ 1&-1 & 1& 1 & 1 & i& 1 & -i \\ -1 & 1& 1&1& -i& 1& i& 1
\\ 1 & 1 & 1&-1 & -1 & i & -1 & -i \\ 1&1&
-1&1&-i&-1&i&-1\\1&-1&1&1&-1&-i&-1&i \\ -1&1&1&1&i&-1&-i& 1\end{ssmatrix}$. 
\end{enumerate}

\item \begin{enumerate} \item 
Construct  $P_0 =\frac{1}{9}\begin{ssmatrix} 4 & 2 & 4 \\ 2&1&2 \\ 4 &2
			      &4 \end{ssmatrix},  P_1=
			      \frac{1}{9}
\begin{ssmatrix} 1 & 2 & -2 \\ 2&4&-4 \\
			       -2 &-4 & 4\end{ssmatrix}, 
P_2 =  \frac{1}{9}\begin{ssmatrix}4 & -4&-2 \\ -4 &4 & 2
			      \\ -2 & 2& 1\end{ssmatrix}$

\item Construct  the complete symmetric orthogonal set of idempotents obtained
from the group ring $\cc C_3$ of the cyclic group of order $3$:
 $Q_0= \cir(1,1,1), Q_1=\cir(1,\om,\om^2), Q_2 =
\cir(1,\om^2,\om)$ where $\om$ is a primitive cube root of $1$; 

\item Construct $A= \begin{pmatrix}xP_0 & yP_1 & z P_2 \\ pP_2 & qP_0 & r
P_1 \\ sP_1&tP_2&vP_0\end{pmatrix}$ and 
$B=\begin{pmatrix}aQ_0& bQ_1 & cQ_2 \\ dQ_2 & eQ_0 & fQ_1 \\ gQ_1
	   & hQ_2 & kQ_0 \end{pmatrix}$.

\item Construct $W=\frac{1}{\sqrt{2}}\begin{pmatrix} A&B \\
				  A&-B\end{pmatrix}$.

Then $W$ is a paraunitary matrix. It has size $18\ti 18$ and $18$
variables;  variables can be equated.
 \end{enumerate}
%\end{theorem}
\end{enumerate}

$W$ in the above could for example be replaced   
by $W=\frac{1}{\sqrt{2}}\begin{pmatrix} ipA&qB \\
				  ipA&-qB\end{pmatrix}$ where $p,q$ are
      variables and $i=\sqrt{-1}$ but as pointed out 
this is separable and may be constructed as a product.

A complex Hadamard matrix is a matrix $H$ of size $n\ti n$ with entries
of modulus $1$ and satisfying $HH^*=nI_n$. 
Complex Hadamard matrices arise in the study of operator algebras and
 the theory of quantum computation. 

By giving values which are $k^{th}$
 of unity  to the variables, with  $k$  divisible by $4$, 
in the above example 4, special types of complex Hadamard matrices
 which are called {\em Butson-type} are obtained. A Butson type Hadamard
 $H(q,n)$ matrix is a complex Hadamard matrix of size $n\ti n$ all of
 whose entries are $q^{th}$ roots of unity. 
 \medskip \medskip

Here is another example which uses group rings:.

%Row and column blocks  with blocks of size $3\ti 3 $ occur naturally
%in $W$ and interchanging row blocks and/or column blocks gives
%paraunitary matrices which are equivalent to $W$.
\begin{enumerate}\item 
Construct the  complete symmetric set of
orthogonal idempotents $\{P_i | i= 0,1,\ldots, 5\}$ 
from the group ring $\cc C_6$ of the cyclic
group $C_6$ of order $6$. This gives
$P_i=\cir(1,\om^i,\om^{2i},\om^{3i},\om^{4i},\om^{5i})$ where $\om$ is
a primitive 6th root of unity. %$\{P_i \, | \,i=0,1,\ldots, 5\}$ be 

\item Define $Q_0=P_0, Q_1=P_1+P_5, Q_2=P_2+P_4$. Note that $Q_0,Q_1,Q_2$ are
real.

\item Let $$E_1=\frac{1}{6} \left(\begin{smallmatrix} 1 & 1&1&1&1&1\\
  1&1&1&1&1&1 \\ 1&1&1&1&1&1 \\ 1&1&1&1&1&1 \\ 1&1&1&1&1&1
  \\ 1& 1&1&1&1&1 \end{smallmatrix}\right),
E_2=\frac{1}{6} \left(\begin{smallmatrix}1 & -1&-1&-1&1&1\\
  -1&1&1&1&-1&-1 \\ -1&1&1&1&-1&-1 \\ -1&1&1&1&-1&-1 \\ 1&-1&-1&-1&1&1
  \\ 1& -1&-1&-1&1&1 \end{smallmatrix}\right),
E_3=\frac{1}{3} \left(\begin{smallmatrix} 2 & 0&0&0&-1&-1\\
  0&2&-1&-1&0&0 \\ 0&-1&2&-1&0&0 \\ 0&-1&-1&2&0&0 \\ -1&0&0&0&2&-1
  \\ -1& 0&0&0&-1&2 \end{smallmatrix}\right).$$

be the complete symmetric set of orthogonal idempotents obtained from
the group ring of 
$S_3 (=D_6)$ as in section \ref{symm6}.

\item Define 
$A= \begin{pmatrix}xE_0 & yE_1 & z E_2 \\ pE_2 & qE_0 & r
E_1 \\ sE_1&tE_2&vE_0\end{pmatrix}$ and 
$B=\begin{pmatrix}aQ_0& bQ_1 & cQ_2 \\ dQ_2 & eQ_0 & fQ_1 \\ gQ_1
	   & hQ_2 & kQ_0 \end{pmatrix}$.

\item Construct $W=\frac{1}{\sqrt{2}}\begin{pmatrix} A&B \\
				  A&-B\end{pmatrix}$.

Then $W$ is a paraunitary matrix which is real.
It is a $36\ti 36$ matrix with $18$ variables.
\end{enumerate}
\subsection{An Algorithm}\label{sisty}
%Here is an algorithm for constructing non-separable paraunitary
%matrices.
\begin{enumerate}
\item Construct different sets $\{ P_0,P_2, \ldots, P_k\}$ and $\{ Q_0,Q_1,
      \ldots, Q_k\}$ of complete orthogonal symmetric of idempotents by
      the methods of section \ref{joint} or section \ref{grring} (or otherwise) 
in $F_{n\ti n}$. ($F$ is usually
      $\cc$ but other fields can also be used.)
\item Construct a paraunitary matrix from  $\{ P_0,P_2, \ldots, P_k\}$
      by either the methods of section \ref{multi} or the methods of 
section \ref{gyt}. Call
      this matrix $A$.
\item Construct a paraunitary matrix from $\{ Q_0,Q_1,
      \ldots, Q_k\}$
      by either the methods of section \ref{multi} or the methods of 
section \ref{gyt}. Call
      this matrix $B$. The variables in $A,B$ can be different. 
\item Construct a tangle of $\{A,B\}$ or $\{B,A\}$.

\end{enumerate}

%In case of $\cc$  `monomial' can be replaced by `monomial times a
%complex number of modulus $1$' in 3.\ and 6.\ and a paraunitary matrix
%is obtained. 

 \vspace{.1in}
An example: 

\begin{enumerate} 
\item Construct $\{P_0,P_1,P_2,P_3\}$ and $\{Q_0,Q_1,Q_2,Q_3\}$ in
 $\cc_{4\ti 4}$ where the $P_i$ and $Q_j$ are obtained by construction
 methods of \ref{joint}, \ref{grring}.
\item Form $\begin{pmatrix} P_0&P_1&P_2&P_3 \\ P_3 & P_0&P_1&P_2 \\
	      P_2&P_3&P_0&P_1 \\ P_1& P_2&P_3&P_0 \end{pmatrix}$. (Here
      the structure of $C_4$ is used and a circulant structure is obtained.)
\item Form $\begin{pmatrix} x_{01}P_0&x_{11}P_1&x_{21}P_2&x_{31}P_3 \\ 
x_{02}P_3 & x_{12}P_0&x_{22}P_1&x_{32}P_2 \\
	      x_{03}P_2&x_{13}P_3&x_{23}P_0&x_{33}P_1 \\ x_{04}P_1&
	     x_{14}P_2&x_{24}P_3&x_{34}P_0 \end{pmatrix}$.
\item Let the matrix in 3.\ be denoted by $A$.
\item Form $\begin{pmatrix} Q_0&Q_1&Q_2&Q_3 \\ Q_1 & Q_0&Q_3&Q_2 \\
	      Q_2&Q_3&Q_0&Q_1 \\ Q_3& Q_2&Q_1&Q_0 \end{pmatrix}$. (Here
      the structure of $C_2\cross C_2$ is used.)
\item Form  $\begin{pmatrix} y_{01}Q_0&y_{11}Q_1&x_{21}Q_2&x_{31}Q_3 \\ 
y_{02}Q_1 & y_{12}Q_0&y_{22}Q_3&y_{32}Q_2 \\
	      y_{03}Q_2&y_{13}Q_3&y_{23}Q_0&y_{33}Q_1 \\ y_{04}Q_3&
	     y_{14}Q_2&y_{24}Q_1&y_{34}Q_0 \end{pmatrix}$.
\item Let the matrix in 6.\ be denoted by $B$.
\item Form $W= \begin{pmatrix} A & B \\ A & -B\end{pmatrix}$. 
%Then $W$ is a paraunitary matrix.
\end{enumerate}
%\section{$RG$-matrices and non-separable}
%Non-separable multidimensional 
%paraunitary matrices may also be produced by considering
%what have been called {\em $RG$-matrices}; see \cite{hur3} for details
% on these. 
%The non-separable paraunitary matrices obtained in \ref{k4}
%essentially use such matrices. % with variables. 
\subsection{Further algorithm}\label{further}
\begin{enumerate}
\item Input paraunitary matrices $A,B$ of the same size but not
      necessarily with the same variables. These may be formed from
      method of section \ref{sisty} or from this algorithm.
\item Form a tangled product of $\{A,B\}$ or $\{B,A\}$.

%\item Input matrices formed by method of item 2.\ and return to item
%  1.\    
\end{enumerate}

\subsection{Further constructions}

The non-separable paraunitary matrices and separable paraunitary
matrices can be combined when appropriate 
as products or as tensor products  to construct further
paraunitary matrices. These may then be input to algorithm of section
\ref{further}. 
%and these may be combined with the non-separable ones as here
%which are not tensor products. 
%The tensor products of the $16 \ti 16 $
%with $4\ti 4$ then also give $64\ti 64$ paraunitary matrices and so on.   

%We could also consider $(C_8\cross C_2) \cross (C_2 \cross C_2)$ and
%this amounts to forming tensor products between the paraunitary matrices
%from $C_8\cross C_2$ with those from $C_2\cross C_2$. 
%amounts to forming tensor products  with those already formed. 
%$(C_8\cross C_2) \cross (C_4)$
%could also be used but note that the ones from $C_4$ are separable as already
%pointed out.
  
%\input{entangle2}
%\input{newpara1}
%\input{}
% Unitary groups may be combined with
%paraunitary matrices to obtain further classes of paraunitary
%matrices. %A unitary matrix may be obtained by replacing the variables
%by $e^{i\theta}$ in a paraunitary matrix. 

\section{Pseudo-paraunitary}\label{pseudo}
Let $P$ be a paraunitary $n\ti n$ matrix with variables ${\bf z}$ over a field $F$.
Then the rows $\{v_1,v_2, \ldots, v_n\}$ of $P$ satisfy $v_iv_i^* = 1$ and
$v_iv_j^*=0$ for $i\neq j$. Note that $v^*$ means 
transpose conjugate over $\cc$, and transpose over other fields, 
with the understanding that 
$z^*=z^{-1}, \{z^{-1}\}^*=z$ for a variable $z$.

Let $P_i=v_i^*v_i$ which are $n\ti n$ matrices of rank $1$ and
involve the variables $\{\bf z, z^{-1}\}$. Then $\{P_1, P_2, \ldots, P_n\}$ is
a complete orthogonal symmetric set of idempotents in the polynomial
ring $F_{n\ti n}[{\bf z, z^{-1}}]$. Hence by the methods of Sections
\ref{idems}, \ref{gyt}, \ref{multi}
  and \ref{nonsep} paraunitary-type matrices may be formed; for the
  method of Section \ref{nonsep} two such sets must be constructed. 
For example
 $W=w_1P_1+w_2P_2+\ldots + w_nP_n$ in
  variables ${\bf w}= (w_1,w_2,\ldots, w_n)$ satisfies $WW^*=1$. Now
  $W$ is a matrix  in the variables $(\bf z,z^{-1},w)$ but cannot be
  termed {\em paraunitary}. Call such a matrix a {\em pseudo-paraunitary}
  matrix. Having constructed $W$ its rows may then be used to
  construct further pseudo-paraunitary matrices and so on.

Thus say $W({\bf z,  z^{-1}})\in F_{n\ti n}[{\bf z, z^{-1}}]$ 
is a {\em pseudo-paraunitary} matrix
if $WW^*=1$. Pseudo-paraunitary matrices may be constructed from
paraunitary matrices and from pseudo-paraunitary matrices.
%Note that $W^*$ means the transpose, or complex
%conjugate transpose over $\cc$, of $W$ and with each variable $z$ replaced by
%$z^{-1}$. 

Here's an  example. 
\begin{enumerate}
\item Form $E_0=\frac{1}{2}\begin{ssmatrix} 1 & 1 \\ 1 &
  1\end{ssmatrix},
E_1=\frac{1}{2}\begin{ssmatrix} 1&-1 \\ -1 & 1 \end{ssmatrix}$.
\item Form $P= xE_0+yE_1 = \frac{1}{2}\begin{ssmatrix} x+y & x-y \\
  x-y & x+y \end{ssmatrix}$ 
\item Let $v_1=\frac{1}{2}(x+y, x-y), v_2=\frac{1}{2}(x-y, x+y)$.
\item Form $P_1=v_1^*v_1 = \frac{1}{4}\begin{ssmatrix} 2 +
  x^{-1}y+y^{-1}x & y^{-1} x-x^{-1}y \\ x^{-1}y-y^{-1}x &
  2-x^{-1}y-y^{-1}x \end{ssmatrix},
P_2=v_2^*v_2 = \frac{1}{4}\begin{ssmatrix} 2 -
  x^{-1}y-y^{-1}x & y^{-1} x-x^{-1}y \\ x^{-1}y-y^{-1}x &
  2+x^{-1}y+y^{-1}x \end{ssmatrix}$.
\item $P_iP_i=P_i, P_1P_2=0$ and $P_1+P_2=1$. Form $W=zP_1+tP_2$. 
\item $W = W(x,y,x^{-1}, y^{-1}, z,t)$, and $WW^*= 1$.
\item The rows of $W$ can be used  to form further 
  pseudo-paraunitary matrices.
\end{enumerate}
 
Consider $P_1,P_2$ as in this example above. Define $Q_1 = xyP_1,
Q_2=xyP_2$. Define $Q= zQ_1+tQ_2$. Then $Q\in F_{2\ti 2}[x,y,x,t]$,
and is  a polynomial with $QQ^*= x^2y^2I_2$.

Say $W\in F_{n\ti n}[\bf z]$ is a {\em pseudo-paraunitary} matrix over
$F_{n\ti n}[\bf z]$ if
$WW^*=pI_n$ where $p$ is a monomial in $\bf z$. A
pseudo-paraunitary matrix over $F_{n\ti n}[{\bf z,z^{-1}}]$ may be used
to construct a pseudo-paraunitary matrix over $F_{n\ti n}[\bf z]$ and
vice versa.

Pseudo-paraunitary matrices in general may be constructed
from paraunitary matrices and from pseudo-paraunitary matrices.

\section{Determinants and rank}\label{grmat}
%The methods given involves matrices of idempotents. 
Here we consider properties of complete sets of idempotent matrices
and ranks of the idempotents. 
 
%\input{rankidem}
%\subsection{Rank and Determinants}\label{grmat1}

\begin{lemma}\label{trrank} Suppose $\{E_1,E_2, \ldots, E_s\}$ is a
set of orthogonal idempotent matrices. Then $\rank
(E_1+E_2 +\ldots + E_s) = \tr (E_1+E_2+ \ldots + E_s) = \tr E_1+ \tr
 E_2+ \ldots + \tr E_s = \rank E_1+ \rank E_2 + \ldots +\rank
E_s$.
% or more generally $\rank (E_{i_1} + E_{i_2} + \ldots E_{i_s}) =
%\rank E_{i_1} +\rank E_{i_2}+ \ldots + \rank E_{i_s}$ 
\end{lemma}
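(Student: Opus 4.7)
The plan is to reduce everything to the single classical identity that for any idempotent matrix $E$ one has $\tr E = \rank E$. Once this is in hand, the four-term chain of equalities will fall out by linearity of the trace together with the fact that the sum of pairwise orthogonal idempotents is again an idempotent.

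First I would set $F = E_1 + E_2 + \cdots + E_s$ and verify that $F$ is itself an idempotent. Expanding $F^2$ and using the hypotheses $E_i^2 = E_i$ together with $E_iE_j = 0$ for $i \neq j$ collapses the double sum to $E_1 + \cdots + E_s = F$, so $F^2 = F$. This step uses both parts of the orthogonal-idempotent hypothesis and is where the orthogonality is essential; without it, the rank of the sum could strictly exceed the sum of the ranks.

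Next I would invoke the standard fact that for an idempotent matrix $E$ over a field, $\rank E = \tr E$. The quick justification is that the minimal polynomial of $E$ divides $x(x-1)$, so $E$ is diagonalisable with eigenvalues in $\{0,1\}$; hence in a suitable basis $E$ is a diagonal matrix of zeros and ones, and both the rank and the trace count the number of ones. Applying this to $F$ gives $\rank F = \tr F$, and applying it to each $E_i$ individually gives $\rank E_i = \tr E_i$. Linearity of the trace then yields
\[
\tr F = \tr E_1 + \tr E_2 + \cdots + \tr E_s,
\]
so stringing the three equalities together produces the desired chain
\[
\rank F = \tr F = \sum_{i=1}^s \tr E_i = \sum_{i=1}^s \rank E_i.
\]

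The only real obstacle is the idempotent identity $\rank E = \tr E$, and even this is only mildly delicate in positive characteristic: one wants to be sure the argument via diagonalisability over an algebraic closure is still available. Since the statement is purely a matrix identity, one may freely extend scalars to a splitting field of $x(x-1)$, which is just the prime field itself, so no characteristic hypothesis is needed. Everything else is formal manipulation with the orthogonality relations and additivity of the trace.
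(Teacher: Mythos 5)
Your proof is correct and follows essentially the same route as the paper's: both rest on the classical identity $\tr E = \rank E$ for an idempotent matrix, the observation that a sum of pairwise orthogonal idempotents is again idempotent, and linearity of the trace (you verify the first two facts directly, whereas the paper cites a reference for the first and obtains the second by combining idempotents pairwise). The one place you overreach is the closing claim that ``no characteristic hypothesis is needed'': over a field of characteristic $p$ the trace of a rank-$r$ idempotent is $r\cdot 1_F$, i.e.\ $r$ reduced modulo $p$ (the $p\times p$ identity over $\mathbb{F}_p$ has rank $p$ but trace $0$), so the trace argument only gives the rank identity modulo $p$; the paper carries the same implicit restriction, but you should not assert that the argument is characteristic-free.
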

\begin{proof}
It is known that $\rank A = \tr A$ for an idempotent matrix, see
for example \cite{idemrank}, and so
$\rank E_i = \tr E_i$ for each $i$. If $\{E,F,G\}$ is a set an orthogonal
 idempotent matrices so is  $\{E+F,G\}$. From this it follows that $\rank
(E_1+E_2 +\ldots + E_s) = \tr (E_1+E_2+ \ldots E_s)= \tr E_1+\tr E_2 +
 \ldots + \tr E_s = \rank E_1+ \rank E_2 + \ldots \rank
E_s$.
\end{proof}
\begin{corollary}\label{trrank1}
$\rank(E_{i_1}+ E_{i_2}+ \ldots + E_{i_k})= 
\rank E_{i_1} +\rank E_{i_2}+ \ldots + \rank E_{i_k}$ for $i_j \in \{
1,2,\ldots, s\}$, and $i_j\neq i_l$ for $j\neq l$.
\end{corollary}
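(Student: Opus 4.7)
The plan is to reduce the corollary immediately to Lemma \ref{trrank} by observing that any subset of a set of orthogonal idempotents is itself a set of orthogonal idempotents. Since we are given that $\{E_1, E_2, \ldots, E_s\}$ is orthogonal, and since the defining conditions $E_i^2 = E_i$ and $E_iE_j = 0$ for $i \neq j$ are preserved under restriction to a subfamily, the set $\{E_{i_1}, E_{i_2}, \ldots, E_{i_k}\}$ is orthogonal whenever the $i_j$ are distinct elements of $\{1,2,\ldots,s\}$.

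With that observation in hand, I would simply invoke Lemma \ref{trrank} applied to the subfamily $\{E_{i_1}, E_{i_2}, \ldots, E_{i_k}\}$ in place of the original family. The lemma gives
\[
\rank(E_{i_1} + E_{i_2} + \ldots + E_{i_k}) = \rank E_{i_1} + \rank E_{i_2} + \ldots + \rank E_{i_k},
\]
which is exactly the claim of the corollary.

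There is no real obstacle here; the only thing to check is that the hypothesis of Lemma \ref{trrank} is not stronger than ``orthogonal family of idempotents'' (in particular it does not require that the idempotents sum to the identity), which is indeed the case as stated. Thus the corollary is essentially a restatement of the lemma for arbitrary subfamilies, and the proof reduces to one line together with the preservation-under-restriction remark above.
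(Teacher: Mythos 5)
Your proof is correct and is exactly the intended argument: the paper states the corollary without proof precisely because, as you note, any subfamily of an orthogonal set of idempotents is again such a set and Lemma \ref{trrank} (whose hypothesis does not require completeness) applies directly to it. Your explicit check that the lemma's hypothesis is preserved under restriction is the only content needed.
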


Let $\{e_1, e_2, \ldots, e_k\}$ be a complete orthogonal set of idempotents
in a vector space over $F$. %in the group ring $FG$ where $|G| = n$ and let $E_i$ be the matrix
%corresponding to $e_i$ for $i=1,2, \ldots, k$.

%Let $E_{i,j}$ denote the $j^{th}$ column of $E_i$.

\begin{theorem}\label{gr1} Let $w= \al_1 e_1 + \al_2 e_2 + \ldots +
\al_ke_k$ with $\al_i \in F$. Then $w$
 is invertible if and only if each $\al_i \neq 0$ 
and in this case $w^{-1}
 = \frac{1}{\al_1}e_1 + \frac{1}{\al_2}e_2+ \ldots + \frac{1}{\al_k}e_k$.
\end{theorem}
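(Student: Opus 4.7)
The plan is to verify the proposed formula for $w^{-1}$ directly and then handle the converse by a zero-divisor argument. First I would set $u = \dfrac{1}{\alpha_1}e_1 + \dfrac{1}{\alpha_2}e_2 + \cdots + \dfrac{1}{\alpha_k}e_k$, which is well-defined precisely when each $\alpha_i \neq 0$, and compute the product $wu$ by expanding the double sum $\sum_{i,j} \alpha_i\alpha_j^{-1}\, e_ie_j$. Orthogonality $e_ie_j = 0$ for $i\neq j$ kills every off-diagonal term, while $e_i^2 = e_i$ collapses the diagonal $i = j$ term to $\alpha_i\alpha_i^{-1}e_i = e_i$. Completeness then gives $wu = e_1+e_2+\cdots+e_k = 1$, and the symmetric computation yields $uw = 1$, so indeed $w^{-1} = u$. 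This covers the ``if'' direction.

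For the ``only if'' direction I would argue by contrapositive. Suppose some coefficient vanishes, say $\alpha_j = 0$. Multiplying $w$ on the left by $e_j$ and using orthogonality together with $e_j^2 = e_j$ gives
\[
e_j w \;=\; \sum_{i=1}^{k}\alpha_i e_je_i \;=\; \alpha_j e_j \;=\; 0.
\]
However, $e_j \neq 0$ by the definition of a complete orthogonal set of idempotents, so if $w$ had a right inverse $w'$ then $e_j = e_j(ww') = (e_jw)w' = 0$, a contradiction. Hence $w$ cannot be invertible.

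The proof is essentially bookkeeping; the only ``obstacle'' is being explicit about which defining property of a complete orthogonal set of idempotents is invoked at each step (orthogonality to clear cross terms, idempotence to collapse the diagonal, completeness to recover $1$, and the nonvanishing $e_i\neq 0$ to run the contradiction). No commutativity hypothesis or information about the ambient ring is needed beyond what is already built into the definitions given in the excerpt.
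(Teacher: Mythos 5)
Your proposal is correct and follows essentially the same route as the paper: verify the candidate inverse directly via orthogonality, idempotence and completeness, and show in the converse direction that a vanishing coefficient forces $w$ to annihilate the nonzero idempotent $e_j$, hence $w$ is a zero-divisor and not invertible. The only difference is that you spell out the zero-divisor contradiction ($e_j = (e_jw)w' = 0$) explicitly, which the paper leaves implicit.
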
 

\begin{proof} Suppose each $\al_i \neq 0$.
Then $w(\frac{1}{\al_0}e_0+\frac{1}{\al_1}e_1+ \ldots + \frac{1}{\al_k}e_k)
 = e_0^2 + e_1^2 + \ldots + e_k^2 = e_0+e_1+\ldots + e_k = 1$.

Suppose $w$ is invertible and that some $\al_i=0$. 
Then $we_i =0$ and so $w$ is a (non-zero) zero-divisor and is not invertible.
\end{proof}

%\begin{corollary}{\label{vert}}

We now specialise the $e_i$ to be $n\ti n$ matrices and in this case
use capital letters and let $e_i = E_i$.
 
Let $A= a_1 E_1 + a_2 E_2 + \ldots + a_kE_k$. Then $A$
 is invertible if and only if each $a_i \neq 0$ and in this case $A^{-1}
 = \frac{1}{a_1}E_1 + \frac{1}{a_2}E_2+ \ldots + \frac{1}{a_k}E_k$. 
%\end{corollary}

\begin{theorem}{\label{det}} Suppose $E_1, E_2, \ldots, E_k$ is a
 complete symmetric orthogonal set of idempotents in $F_{n\ti n}$. 
Let $A= a_1 E_1 + a_2 E_2 + \ldots +
 a_kE_k$ with $a_i\in F$.  Then the determinant of $A$ is 
$|A| = a_1^{\rank E_1}a_2^{\rank E_2}\ldots a_k^{\rank E_k}$.
\end{theorem}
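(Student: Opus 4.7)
The plan is to reduce $A$ to block-diagonal form via the decomposition of $F^n$ induced by the idempotents, compute the determinant in that basis, and invoke invariance of the determinant.

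First I would establish the direct sum decomposition $F^n = V_1 \oplus V_2 \oplus \cdots \oplus V_k$ where $V_i = \operatorname{Im}(E_i)$ (viewing $E_i$ as a linear map on column vectors $F^n$). Completeness $\sum_i E_i = I$ gives $v = \sum_i E_i v$ for every $v$, so the $V_i$ span $F^n$. Orthogonality gives directness: if $v \in V_i$, write $v = E_i u$; then $E_j v = E_j E_i u = 0$ for $j \neq i$ and $E_i v = E_i^2 u = E_i u = v$. So the $E_j$ serve as projections onto the $V_j$, which forces $\sum_i V_i$ to be direct. In particular $\dim V_i = \rank E_i$ and $\sum_i \rank E_i = n$ (which also follows from Lemma \ref{trrank} applied to the full sum).

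Next I would show that $A$ acts on $V_i$ as the scalar $a_i$. Indeed, for $v \in V_i$, the preceding paragraph gives $E_i v = v$ and $E_j v = 0$ for $j \neq i$, so
\[
A v \;=\; \sum_{j=1}^{k} a_j E_j v \;=\; a_i v .
\]
Choosing a basis $B_i$ of each $V_i$ and concatenating produces a basis $B = B_1 \cup \cdots \cup B_k$ of $F^n$ in which the matrix of $A$ is block-diagonal, with the $i$-th block equal to $a_i I_{d_i}$ where $d_i = \rank E_i$.

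Finally the determinant is invariant under change of basis, so
\[
|A| \;=\; \prod_{i=1}^{k} \det(a_i I_{d_i}) \;=\; \prod_{i=1}^{k} a_i^{\rank E_i},
\]
as claimed. There is no real obstacle: the only point requiring any care is checking that the image spaces genuinely form a direct sum and that their dimensions equal the ranks, both of which are immediate consequences of $E_i^2 = E_i$, $E_iE_j = 0$, and $\sum E_i = 1$. Symmetry of the idempotents plays no role in the argument.
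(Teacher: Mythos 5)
Your proof is correct and follows essentially the same route as the paper: the paper also observes that the columns of $E_i$ (i.e.\ your $V_i=\operatorname{Im}(E_i)$) are eigenvectors of $A$ for the eigenvalue $a_i$, proves they jointly form a basis by the same ``multiply the relation by $E_s$'' argument you use for directness, and concludes that $A$ is similar to a diagonal matrix with $a_i$ repeated $\rank E_i$ times. Your closing remark that symmetry of the idempotents is not actually needed is accurate (the paper mentions $E_iE_j^*=0$ in passing but its argument, like yours, only uses $E_iE_j=0$).
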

\begin{proof} Now $AE_i = a_iE_i^2 = a_iE_i$. Thus each column of $E_i$ is
 an eigenvector of $A$ corresponding to the eigenvalue $a_i$. Thus there
 are at exist $\rank E_i$ linearly independent eigenvectors corresponding to
 the eigenvalue $a_i$. Since $\rank E_1 + \rank E_2 + \ldots + \rank
 E_k = n$, there are exactly $\rank E_i$ linearly independent
 eigenvectors corresponding to the eigenvalue $a_i$. 
Let $r_i =\rank E_i$. Let these $r_i$
linearly independent eigenvectors corresponding to $a_i$ be
 denoted by $v_{i,1}, v_{i,2}, \ldots v_{i,r_i}$. Do this for each $i$.

Any  column of $E_i$ is perpendicular to any column of $E_j$ for
$i\neq j$ as  $E_iE_j^* = 0$.
%Consider then the eigenvectors $v_{1,1}, v_{1,2}, \lodts, v$

Suppose now $\sum_{j=1}^{r_1} \alpha_{1,j} v_{1,r_j} +
 \sum_{j=1}^{r_2}\alpha_{2,j}v_{2,r_j} + \ldots +
 \sum_{j=1}^{r_k}\alpha_{k,j}= 0$.

Multiply through by $E_s$ for $1\leq s \leq k$. This gives
 $\sum_{j=1}^{r_k}\alpha_{k,j}v_{k,j} = 0$ from which it follows that
 $\alpha_{k,j} = 0$ for $j=1,2, \ldots r_k$.  

Thus the set of vectors $S=\{v_{1,1}, v_{1,2}, \ldots v_{1,r_1},
 v_{2,1}, v_{2,2},
 \ldots, v_{2,r_2} \ldots, \ldots, v_{k,1}, v_{k,2}, \ldots, v_{k,r_k}\}$
 is linearly independent and form a basis for $F^n$ -- remember that
 $\rank (E_1+E_2+\ldots + E_k)= n$. Hence $A$ can be
 diagonalised by the matrix of these vectors and thus there is a
 non-singular matrix $P$ such that 
$P^{-1}AP = D$ where $D$ is a diagonal matrix consisting of 
 the $a_i$ repeated $r_i$ times for each $i=1,2, \ldots k$.

Hence $|A| = |D| = a_1^{r_1}a_2^{r_2}\ldots a_k^{r_k}$.
\end{proof}

%This result holds when each $a_i$ is a variable or product of
%variables. 

%\input{rankidem2}

%\input{addunitary1}
%\input{addunitary2}
%\input{addunitary3}
%\input{addunitary4}
%\input{addunitary5}

\end{document}